%
%
%
%
%
%
%
\documentclass[%
 prl, 
 twocolumn,
 amsmath,amssymb,
 aps,
]{revtex4}

\usepackage{graphicx}
\usepackage{dcolumn}
\usepackage{booktabs}

\usepackage{bm}

\usepackage{xcolor}
\usepackage{xr}
\usepackage{comment}
\usepackage{hyperref}
\usepackage{cleveref}
\usepackage{enumitem}
\usepackage{accents}
\hypersetup{colorlinks,linkcolor={blue!90!black},citecolor={blue!90!black},urlcolor={blue!90!black}}

\DeclareMathAlphabet\mathbfcal{OMS}{cmsy}{b}{n}

\usepackage{amsmath}
\usepackage{mathtools}

\usepackage{amsmath}
\usepackage{amsthm}


\DeclareMathOperator{\tr}{tr}


\newcommand{\cdfsim}{\overset{\text{cdf}}{\ \sim\ }}
\newcommand{\pdfsim}{\overset{\text{pdf}}{\ \sim\ }}
\newcommand{\dconv}{\overset{\text{d}}{\to}}
\newcommand{\Beta}{\mathrm{B}}
\newcommand{\R}{\mathbb{R}}
\newcommand{\Prb}{\mathrm{P}}
\newcommand{\norm}[1]{\left\lVert#1\right\rVert}
\newcommand{\SP}{\mathbb{S}}
\newcommand{\TP}{\mathbb{T}}
\newcommand{\E}{\mathbb{E}}
\newcommand{\N}{\mathbb{N}}
\newcommand{\evalat}{\big\rvert}
\newcommand{\magn}[1]{\left|#1\right|}

\newtheorem{theorem}{Theorem}[section] 
\newtheorem{lemma}[theorem]{Lemma} 
\newtheorem*{corollary}{Corollary} 
\theoremstyle{definition}
\newtheorem{definition}{Definition}[section] 
\newtheorem{example}{Example}[section] 

\begin{document}


\title{Stochastic Voronoi Tessellations as Models for\\ Cellular Neighborhoods in Simple Multicellular Organisms}


\author{Anand Srinivasan}
\email[]{as3273@cam.ac.uk}
\affiliation{Department of Applied Mathematics and Theoretical 
Physics, Centre for Mathematical Sciences,\\ University of Cambridge, Wilberforce Road, Cambridge CB3 0WA, 
United Kingdom}%
\author{Steph S.M.H. H{\"o}hn}
\email[]{sh753@cam.ac.uk}
\affiliation{Department of Applied Mathematics and Theoretical 
Physics, Centre for Mathematical Sciences,\\ University of Cambridge, Wilberforce Road, Cambridge CB3 0WA, 
United Kingdom}
\author{Raymond E. Goldstein}
\email[]{R.E.Goldstein@damtp.cam.ac.uk}
\affiliation{Department of Applied Mathematics and Theoretical 
Physics, Centre for Mathematical Sciences,\\ University of Cambridge, Wilberforce Road, Cambridge CB3 0WA, 
United Kingdom}%

\date{\today}

\begin{abstract}  Recent work on distinct multicellular organisms
has revealed a hitherto unknown type of biological noise; rather than a regular 
arrangement, cellular  neighborhood volumes, obtained by Voronoi tessellations of the cell 
locations, are broadly distributed and consistent with gamma distributions.  
We propose an explanation for those observations in the case of the alga {\it Volvox}, whose 
somatic cells are embedded in an extracellular matrix (ECM) they export.  Both a solvable
one-dimensional model of ECM growth derived from bursty transcriptional activity and a two-dimensional 
``Voronoi liquid" model are shown to provide
one-parameter families that smoothly interpolate between the empirically-observed 
near-maximum-entropy gamma distributions and the crystalline limit of Gaussian distributions 
governed by the central limit theorem.   
These results highlight a universal consequence of intrinsic biological noise on the 
architecture of certain tissues.

\end{abstract}
\maketitle

Some of the simplest multicellular organisms consist of tens, 
hundreds, or thousands of cells arranged in an extracellular matrix (ECM), a network of 
proteins and biopolymers secreted by the cells.
They often have a simple geometry: linear chains and rosettes of 
choanoflagellates \cite{Dayel,Fairclough}, sheets and spheres of  
algae \cite{Hallmann}, and cylinders of sponges \cite{sponges}.   
While at first glance the arrangement of cells in the ECM appears regular, 
recent work \cite{Dayetal} revealed a hitherto undocumented disorder found by assigning neighborhoods to cells 
through a Voronoi tessellation based on cell centers.
Strikingly, both the lab-evolved ``snowflake yeast" \cite{Ratcliff} (a ramified 
form found after rounds of selection for sedimentation speed) and the alga 
{\it Volvox carteri} have broad distributions of Voronoi volumes $v$ accurately fit by $k$-gamma distributions
\begin{equation}
    p(v)=\frac{1}{\bar{v}-v_c}\frac{k^kx^{k-1}}{\Gamma(k)}
    \exp(-kx),\quad x=\frac{v-v_c}{\bar{v}-v_c},
    \label{eq:translated_gamma}
\end{equation}
where $\overline{v}$ is the mean volume and $v_c$ is the cell size. 
Particularly for \textit{Volvox}, these observations are central to a general question in developmental biology: 
{\it How do cells produce structures external to themselves in an accurate and robust manner?}

{\it Volvox} is one of the simplest multicellular systems 
with which to study statistical fluctuations in ECM generation.  
The adult (Fig. \ref{fig1}(a)) consists of $\sim\! 10^3$ somatic cells embedded at 
the surface of a transparent ECM, the uppermost layer of which 
is a thin elastic shell $\sim\! 500\,\mu$m 
in diameter and $\sim\! 30\,\mu$m thick, with a more gelatinous interior below; the organism is 
$>98 \%$ ECM. 
Daughter colonies develop from germ cells below the outer layer through rounds of 
binary division that yield a raft of cells held together by cytoplasmic bridges 
remaining
after incomplete cytokinesis.  
Following ``embryonic inversion" that turns the raft inside-out \cite{inversion}, 
daughters enlarge by export of ECM proteins, 
expanding the colony to its final size over the course of a day, during which the 
widely-distributed neighborhood volumes appear.  
Figure \ref{fig1}(a) shows a a section of 
the Voronoi tessellation obtained by light-sheet imaging \cite{Dayetal}. 
The area distribution of Voronoi partitions across $6$ organisms is shown in 
Fig. \ref{fig1}(b) along with a fit of the gamma distribution 
\eqref{eq:translated_gamma} that yields $k\approx 2.35 \pm 0.04$ (95\% CI). 

The general issue above becomes the question of how cells generate ECM 
so that the spheroidal form is maintained during the dramatic colony growth despite the 
strong right-skew of the neighborhood volume distribution \eqref{eq:translated_gamma}. 
A biological answer might invoke cell signaling in response to mechanical forces as a 
mechanism to coordinate growth and would ascribe the distribution 
\eqref{eq:translated_gamma} to imperfections 
in that process.  
Surprisingly, the novel problem of neighborhood distributions is so little-studied 
that we do not even understand quantitatively the feedback-free case, surely a 
benchmark for any analysis of correlations.  
Work in granular physics \cite{AsteEntropy} has shown that \eqref{eq:translated_gamma} 
arises from maximizing entropy of partitions 
subdividing a volume subject to a fixed mean partition size.  
But this begs the question of why biological systems should follow a maximum-entropy principle. 

\begin{figure}[t]
\includegraphics[width=\columnwidth]{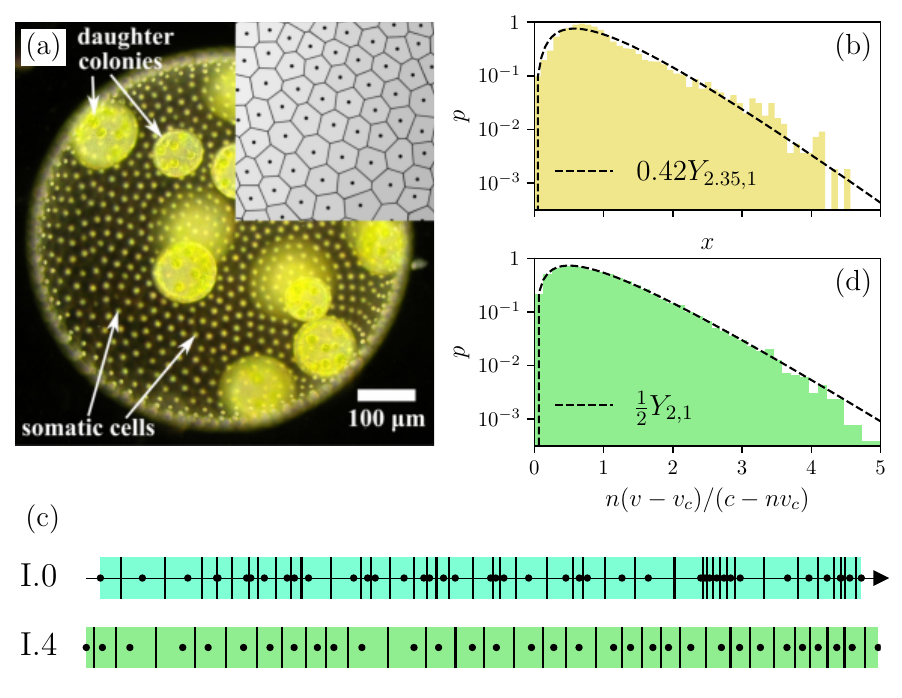}
\caption{The green alga {\it Volvox} and 1D models for cell positions.
(a) Adult with cell types labelled, and section of Voronoi 
tessellation around somatic cells (dots within each Voronoi polygon), adapted from \cite{Dayetal}. 
(b) Voronoi areas $x$ follow the translated gamma distribution 
\eqref{eq:translated_gamma}, computed by a fit of $k$.  
(c) Model \protect\hyperlink{I.1}{I.1} \cite{SM}, the Poisson process on $[0, \infty)$, where cells are indicated by dots, Voronoi boundaries by vertical 
line segments and intercalating ECM is colored.
Model \protect\hyperlink{I.4}{I.4}, the circular Poisson process with minimum spacing.
(d) Numerical distribution for model \protect\hyperlink{I.4}{I.4}, with no cell overlaps, 
compared to the analytical gamma distribution for large $n$.
}
\label{fig1}
\end{figure}

Here we study perhaps the simplest models for cellular 
positioning within a thin ECM, where noisy matrix production by 
statistically identical cells
causes them to space apart randomly during growth.
We formulate the resulting stochastic cellular configuration as a 
\textit{point process} \cite{DaleyPointProcess} whose Voronoi tessellation is a 
well-studied topic in stochastic geometry \cite{StochasticGeometry} and serves as a 
model for cellular neighborhoods.  
A class of analytically solvable 1D models is used to
illustrate how 
gamma distributions \eqref{eq:translated_gamma} may arise from 
feedback-free growth processes, and a one-parameter family of 
2D stochastic Voronoi tessellations is introduced as a prototype 
of systems with interactions between polygons.
The following is a non-technical summary; details are 
in Supplementary Material \cite{SM}.
In the following, capital letters $X_{i;\theta_1,\ldots}$ denote random variables 
(r.v.s) $i$ with parameters $\theta_1,\ldots$, and $W,X,Y,Z$ denote Gaussian, exponential, 
gamma, and beta r.v.s.

In an act of extreme reductionism, consider a linear \textit{Volvox} (termed 
model \hyperlink{I.0}{I.0} \cite{SM}).  
The gamma distribution has the property of being self-\textit{divisible}, that is, 
$Y_{1;k,\lambda} = Y_{2;\frac{k}{2},\lambda} + Y_{3;\frac{k}{2},\lambda}$ for 
independent $Y_2, Y_3$ by the convolution rule for sums of random variables \cite{SM}.
In a 1D ECM (Fig. \ref{fig1}c), if $L_i$ are the cellular spacings then 
$V_i = (L_i + L_{i+1})/2$ are its Voronoi segments, which suggests the 
decomposition of $V_i$ into spacings $L_i$ which are themselves independent and 
identically distributed (i.i.d.) gamma r.v.s.
Similarly, $L_i$ is itself interpretable as arising from i.i.d. gamma-distributed mass increments during growth.
This leads to two classes (i, ii) of configurations which one may expect to observe experimentally.
The first (i) is that where $L_i$ is formed by a large number $k$ of small random 
mass increments, where $k$-gamma 
converges to a Gaussian by the central limit theorem, and at fixed mean to a delta distribution \cite{SM}. 

The second (ii) is the case in which $L_i$ is formed by a small number of 
large mass increments, suggesting some intermediate piece comprising the ECM is produced 
at low copy number.
A plausible precedent for fluctuations possessing this particular distribution 
is the bursting protein transcriptional activity observed in simple 
unicellular organisms such as \textit{E. coli} \cite{StochasticProtein1}. 
There, it is known that mRNA transcription occurs at some rate when a gene is 
turned on, individual mRNA molecules are transcribed at some rate into proteins before 
degrading (e.g. by RNases) with an exponentially-distributed lifetime, and 
individual protein bursts exported into the extracellular environment 
correspond 1-1 with individual mRNA transcripts within the cell.
Translating this phenomenology to {\it Volvox}, we hypothesize  
that $L_i \propto P_i$ where $P_i$ is the steady-state extracellular concentration of a 
protein $P$ governing ECM assembly. 
Then the time-dependent concentration $P_i(t)$ is a pure-jump process with some total number $b$ of 
exponentially-distributed bursts, resulting in the $b$-gamma-distributed spacings $L_i = Y_{i;b,\lambda}$.
(Fractional values of $b$, representing cross-cell averages, yield the same stationary 
$b$-gamma distribution, as shown from the master equation for $P_i(t)$ with 
exponential kernel \cite{StochasticProtein2}.
We take $b \ge 1$ as every cellular neighborhood 
grows in \textit{Volvox} without cell division.)
Then, $V_i$ is $2b$-gamma distributed,
\begin{equation}
    V_i = \frac{X_{j;\lambda}+X_{j+1;\lambda}}{2} =  \frac12Y_{i;2b,\lambda} \pdfsim \frac{4\lambda^{2b}v^{2b-1}e^{-2\lambda v}}{\Gamma(2b)},
    \label{eq:1d_poisson_gamma}
\end{equation}
which, apart from the offset $v_c$, is \eqref{eq:translated_gamma} with $k=2b$.
That $k\approx 2.4$ in \textit{Volvox}, approaching the 1D lower bound of 
$k=2$ and apparently falling in class (ii), is consistent with observations that ISG, a 
glycoprotein critical to the ECM organization, is transcribed over a period of 10 minutes, quite short compared to its accumulation in the extracellular matrix 
over timescales comparable to the 48h life cycle \cite{HallmannISG}.

In the low-copy number limit $b\to 1$, cellular positions 
$R_i = \sum_{j \le i}L_j$ occur as a Poisson process.
This is the maximum-entropy configuration, in which cell positions are uncorrelated 
in the sense that for any fixed number of cells $N$ occurring within a fixed segment of 
size $L$, $\{R_i\}_{i=1}^N$ are i.i.d. uniform random variables \cite{SM}.
Of course, the gamma distribution is supported on $[0, \infty)$ and one must consider finite-size 
effects.
It can be shown \cite{SM} that on a circular ECM of fixed circumference $C$ with fixed or variable 
cell count $N$ (termed models \hyperlink{I.1}{I.1}-\hyperlink{I.3}{I.3}), the marginal distribution 
of Voronoi lengths given the above  
converges in the large-$C,N$ limits at fixed cell number density to the same $2b$-gamma 
distribution\textemdash analogous to the convergence of ensembles in statistical 
physics in the thermodynamic limit. 

To complete the 1D analysis, we show that the offset $v_c$ in \eqref{eq:translated_gamma} from finite cell sizes.
Suppose cells with centers of mass at $\{R_i\}_{i=1}^n$ on a circle of circumference $c$ have uniform size $v_c$ with $nv_c \le c$, so that $L_i \ge v_c$ for all $i$.
As we are in 1D, this is expressible as $L_i = v_c + \Tilde{L}_i$, where $\Tilde{L}_i$ are the random spacings of a smaller circle of circumference $c - nv_c$.
This reduces to the fixed-$N, C$ case of model \hyperlink{I.2}{I.2}, hence we have the marginal Beta distribution for Voronoi lengths
\begin{align}
    V_i = v_c + \frac{c-nv_c}{2}Z_{i;2b,(n-2)b}.
\end{align}
Defining the cell number density within the remainder as $\rho = (n-2)/(c-nv_c)$ and taking the thermodynamic limit $n, c \to \infty$ with $\rho$ and $v_c$ fixed, we have by \eqref{eq:circle_gamma_convergence}
\begin{align}
    \label{eq:circle_gamma_vc}
    V_i &= v_c + \frac{n-2}{2\rho}Z_{i;2b,(n-2)b} \overset{n\to\infty}{\longrightarrow} v_c + Y_{i;2b,2\rho}.
\end{align}
Thus, the lengths $V_i$ with $2\rho:=\lambda$ in \eqref{eq:circle_gamma_vc} have precisely the 
distribution \eqref{eq:translated_gamma}
under the substitution $\lambda= k/ (\overline{v} - v_c)$.

Unlike in 1D where any sequence of partitions
forms the real line, cells in 2D are nearly always 
interacting since their neighborhoods are mutually constrained to be a
subdivision of the ECM.  Their positions are derived from the 
neighborhood configurations, as ECM is secreted during growth, and
we should expect that within a Voronoi description the cell locations
will depend on those partitions.
These geometric constraints co-exist with the possibility of 
maximum-entropy (Poisson) and minimum-entropy (crystalline) configurations.
The family of point processes we introduce below models cellular interactions 
based on their Voronoi tessellations, interpolates between these phases, and 
can be interpreted as arising from a strain energy in each neighborhood.
Our focus on geometry is complementary to recent work on topological properties of
tessellations \cite{Skinner}.

Let the ECM now be a bounded domain $\Omega \subset \R^2$, with area $\magn{\Omega} > 0$ and
a fixed number $n$ of cells. We 
assume that cells are scattered at positions 
$\{\bm{x}_i\}_{i=1}^n = \mathcal{X}$ with cellular neighborhoods $\{D_i\}_{i=1}^n = \mathcal{D}$ comprising $\Omega$ in a manner which minimizes 
\begin{align}
    \label{eq:quantization}
    E(\mathcal{X}, \mathcal{D}) &= \frac{1}{2}\sum_{i=1}^n\int_{D_i}\norm{\bm{x}-\bm{x}_i}^2d\bm{x},
\end{align}
Each summand of \eqref{eq:quantization} is the trace of $D_i$'s second area moment 
about $\bm{x}_i$, which for polygonal $D_i$ is the small-strain limit of the
bulk energy of a deformation 
from a regular $n$-gon centered at $\bm{x}_i$ \cite{E2}.
Alternatively, minimizing $E$ has
the interpretation of an optimal cell-placement 
problem with a cost for transporting resources produced by cells at 
$\bm{x}_i$ to other points $\bm{x}$ in the neighborhoods $D_i$.

\begin{figure*}[t!]
\includegraphics[width=1.99\columnwidth]{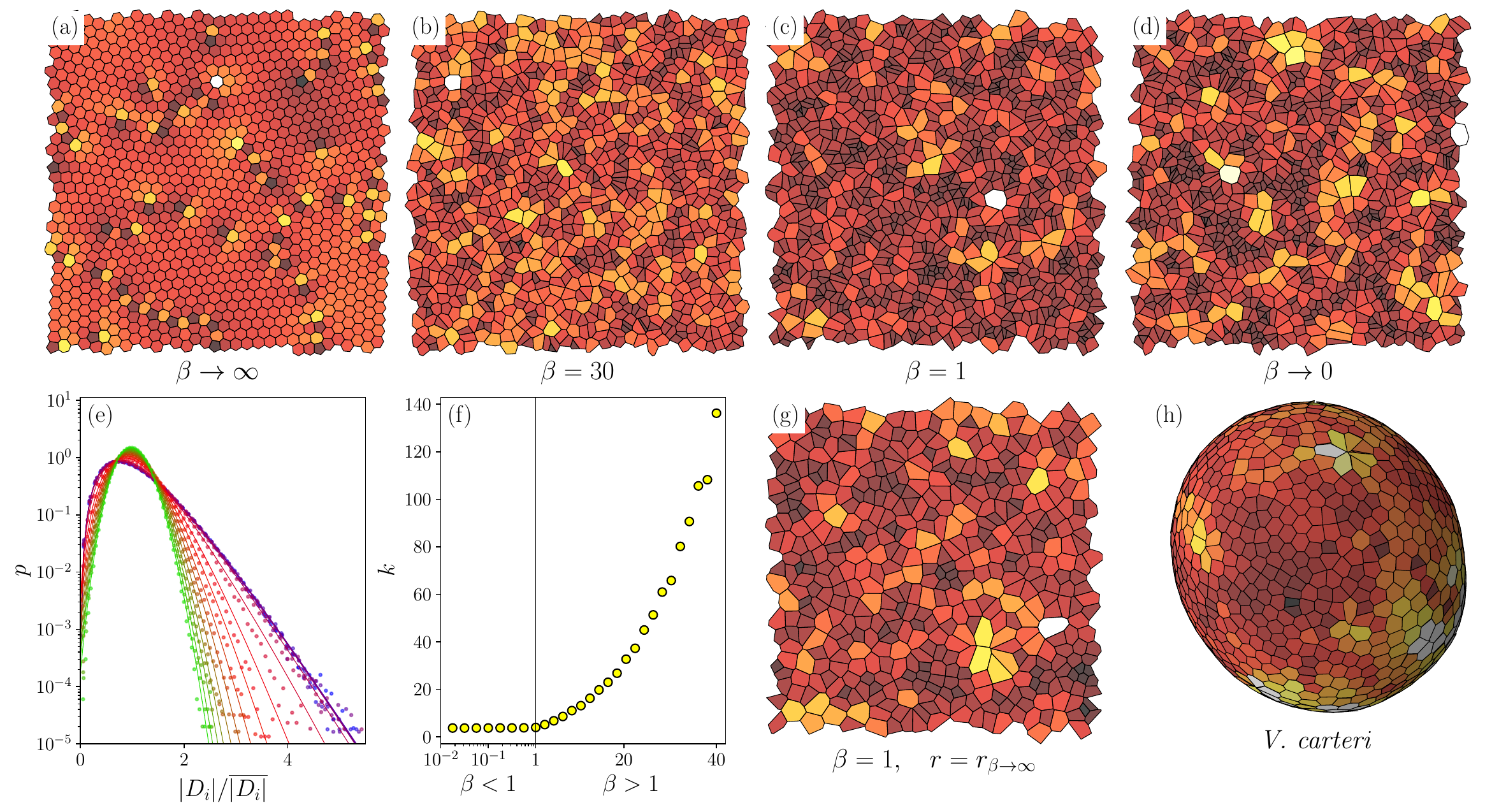}
\caption{Voronoi liquid interpolates between maximum- and minimum-entropy point configurations in 2D.
(a)-(d): Monte Carlo simulations of the Voronoi liquid at varying temperatures, from the infinite-temperature (Poisson) limit to the zero-temperature (sphere-packing) limit.
(e): Area distributions at all temperatures are approximately $k$-gamma distributed, with $k$ monotone increasing with $\beta$. 
(f): $k$ is constant for $\beta < 1$ and begins to grow superlinearly for $\beta > 1$.
(g) Voronoi tessellation of \textit{V. carteri} \cite{SM}.
(h): Minimum spacing $r$ enforced by the Mat\'ern thinning rule \cite{SM}, with $r$ equal to the minimum spacing in the frozen limit $\beta \to \infty$.  Voronoi
polygons are colored from black to white based on relative area in each panel.
}
\label{fig:pp_2d}
\end{figure*}

For fixed cell positions $\mathcal{X}$, the 
set $\mathcal{D}$ minimizing $E$ is precisely the Voronoi tessellation of $\mathcal{X}$.
To see this, note that for Voronoi $\mathcal{D}$, any other $\mathcal{D}'$, and  
a point $\bm{y}$ falling in $D_i\in \mathcal{D}$ and also in $D_i'\in \mathcal{D}'$, 
we have $\norm{\bm{y}-\bm{x}_i'} \ge \norm{\bm{y}-\bm{x}_i}$ by 
the Voronoi rule, hence $E(\mathcal{X},\mathcal{D}') \ge E(\mathcal{X},\mathcal{D})$.
Rescaling the coordinates $\bm{x} \mapsto \bm{x}\sqrt{\rho}$ to achieve unit number density $\rho = n/\magn{\Omega} \mapsto 1$, this motivates the study of the positional energy
\begin{equation}
    V(\mathcal{X}) = \rho^2E(\mathcal{X}, \text{Vor}(\mathcal{X})),
\end{equation}
where $\text{Vor}(\mathcal{X})$ is the Voronoi tessellation. 
For fixed neighborhoods $\mathcal{D}$, calculating $\partial E/\partial \bm{x}_i = 0$ 
shows that the minimizing positions $\mathcal{X}$ are the $\mathcal{D}$-centroids 
$\bm{\mu}_i = \magn{D_i}^{-1}\int_{D_i}\bm{x}d\bm{x}$; minimizers of $V$
are \textit{centroidal Voronoi tessellations} (CVTs), ubiquitous in meshing 
problems, clustering, and models of animal behavior \cite{CVT1}. 

Define a family of Gibbs point processes \cite{StochasticGeometry} 
whose joint 
positional distributions conditional on fixed $N$ are 
\begin{align}
    \label{eq:gibbs_voronoi}
    f_\beta(\mathcal{X}) &\propto \exp\left(- \beta V(\mathcal{X})\right),
\end{align}
indexed by a temperature-like quantity $\beta^{-1}$. 
Following others who have investigated phase transitions of this system 
\cite{VL1}, we refer to it as the \textit{Voronoi liquid}, which
differs from classical pair-potential fluids due to 
many-body interactions between Voronoi-incident particles.

The maximum-entropy case (ii)
is realized in the infinite-temperature limit $\beta\to 0$ with 
equiprobable configurations. 
This defines the ``Poisson-Voronoi tessellation" (PVT) \cite{StochasticGeometry}, 
which reduces to the exponentially-distributed spacings discussed in the 1D models above.
The areas $\magn{D_i}$
of 2D PVTs, a realization of which is shown in Fig. \ref{fig:pp_2d}(d), have 
been shown in numerical studies to conform to $k$-gamma 
distributions \cite{TanemuraVoronoi, WeaireVoronoi, FerencVoronoi}. 
A minimum-entropy configuration arises in the zero-temperature limit 
$\beta \to \infty$, where \eqref{eq:gibbs_voronoi} becomes degenerate and 
the configuration freezes to a hexagonal lattice as in Fig. \ref{fig:pp_2d}(d), which is the globally optimal CVT and densest sphere-packing in 2D \cite{CVT2}.
Prior approaches using a structure factor analysis \cite{Klatt} found that, by contrast, Lloyd iterations (corresponding to a ``fast quench'' at zero temperature \cite{VLQuench}) suppress crystalline configurations and adopt amorphous ``hyperuniform'' states.
We investigate now the finite-temperature range $\beta \in (0, \infty)$, and  
show that areas are accurately described by $k$-gamma distributions with $k$ an ``order parameter'' following a monotone relationship 
with $\beta$, analogous to the burst-count-driven spacing distributions of the 1D case.

As a generalization of our previous comment on nonuniqueness, in 2D the entropies of the Voronoi size distribution and of the positional distribution 
do not necessarily follow a monotone relation.
\textit{Volvox} itself (Fig. \ref{fig1}b) provides an example; its scaled 
area distribution \eqref{eq:translated_gamma} has $k\approx 2.3$, while Poisson-Voronoi 
tessellations of the flat torus and sphere have $k\approx 3.5$ \cite{SM}, yet their positional 
distribution is the maximum-entropy one.
Hence, ``entropy'' could refer to the differential entropy of its Voronoi size distribution 
\textit{or} to that of its joint distribution over positions at fixed $N$.

Since $V$ is $C^2$ \cite{CVT1}, \eqref{eq:gibbs_voronoi} is the stationary solution of a 
Langevin equation
\begin{align}
    \label{eq:langevin_vliq}
    dR_i(t) &= - \frac{\partial V}{\partial x_i}\biggr\rvert_{\{R_j(t)\}} dt + \sqrt{2\beta^{-1}} dW_i(t),
\end{align}
with $W_i(t)$ i.i.d. Brownian motions, and
time has been rescaled to $\beta^{-1} t$ to allow integration in the limit $\beta \to \infty$.
Since $\partial V / \partial \bm{x}_i \propto \magn{D_i}(\bm{x}_i - \bm{\mu}_i)$ 
\cite{SM}, \eqref{eq:langevin_vliq} may be interpreted as a neighborhood-centroid-seeking 
model of cellular dynamics during noisy growth or a Markov Chain Monte Carlo (MCMC) method to 
sample the stationary distribution \eqref{eq:gibbs_voronoi}.
An Euler-Maruyama discretization of \eqref{eq:langevin_vliq} does not satisfy 
detailed balance, but this can be rectified by adding
a Metropolis-Hastings step \cite{SM, RHMC}. Using this method, we investigated 
the statistical properties of the Voronoi liquid by numerically solving \eqref{eq:langevin_vliq}
for $n=10^3$ \cite{SM}, the somatic cell count of {\it Volvox}, in the
simplified geometry of a unit square with periodic boundary conditions to remove 
curvature and topology effects.

Figures \ref{fig:pp_2d}(a-d) show samples of the Voronoi liquid at varying temperatures with 
evident differences and similarities to \textit{Volvox}. 
As seen in Fig.~\ref{fig:pp_2d}(e,f), area distributions sampled at 13 logarithmically spaced 
values from $\beta=10^{-3}-1$ and 21 linearly spaced values from $1-40$, are well-described by $k$-gamma distributions with 
$k$ increasing monotonically with $\beta$.
This is consistent with the transition of $p(\magn{D_i}/\magn{\overline{D}_i})$ to 
a parabola on the log-scale in Fig. \ref{fig:pp_2d}e, the limit in which $k$-gamma 
approaches a Gaussian. It is in this sense that the control parameter $\beta$ 
is analogous to the protein burst count $b$ in 1D.
A similar monotone relationship between the ``granular temperature'' $\beta_{\text{gr}}^{-1}$ of a packing and $k$, in which partition size instead played the role of energy, has been noted previously in granular physics \cite{AsteEntropy, Powders}.

The importance of intermediate-entropy configurations is perhaps more readily seen in 2D. 
Studies of confluent tissue \cite{Atia} found that $k$-gamma distributions also arise in the aspect ratios (defined from the eigenvalues of the second area moment).
Poisson-Voronoi tessellations, notably, do not possess gamma-distributed aspect ratios. They instead follow an approximate beta-prime distribution, perhaps as a consequence of the gamma-distributed principal stretches \cite{SM}.
This is seen in Fig.~\ref{fig:pp_2d}(a), where high-aspect ratio ``shards'' occur at 
$\beta=0$, yet disappear at low temperature.
This raises questions of the underlying physics responsible for aspects of stochastic geometry than size.
As a simple extension, the Voronoi liquid with hard-sphere thinning \cite{SM} (one way to produce the offset $v_c$ \eqref{eq:translated_gamma} in 2D), a realization of which is shown 
in Fig.~\ref{fig:pp_2d}(h), does not exhibit these artifacts and more closely resembles 
the regular arrangement observed in {\it Volvox}, with both gamma-distributed areas and 
aspect ratios \cite{SM}. 

A biological interpretation of the Voronoi assumption is that the polygonal boundaries of each
cellular neighborhood are the colliding fronts of isotropically produced ECM 
material exported from cells.
Inverting the typical modelling procedure by \textit{assuming} that the Voronoi rule holds 
at some temperature $\beta^{-1}$, one can infer the 
distributional parameters 
using standard maximum-likelihood estimators for Gibbs point processes \cite{GPP}.
From the estimated temperature, for example, one can invert the 
$k$-$\beta$ relationship by monotonicity to deduce the copy number of bursty rate-limiting 
steps in growth.
Such estimators are critical for elastic models of tissues, where noise in 
individual cellular configurations co-exists with stable geometric properties of the population.
The stochastic Voronoi models we have presented here reproduce aspects of configurational noise, such as 
the empirically observed gamma distributions, and simultaneously provide a formal framework\textemdash an ensemble\textemdash within which 
to infer features of random finite configurations of cells such as interaction strength and preferential geometry.

As a purely 
mathematical construct, a Voronoi tessellation makes no reference to 
microstructure around cells, and it thus plays a role for tissues 
analogous to the random walk model of 
polymers and the hard sphere model of fluids.
Yet, 
each \textit{Volvox} somatic cell sits within a polygonal ``compartment" whose 
boundaries are composed of
denser material within the larger ECM \cite{compartment}.  Dimly visible in 
brightfield microscopy, 
these compartments have recently been labelled
fluorescently \cite{Benni1}, enabling the simultaneous
motion tracking of cells and growth of compartments during development. 
The strong correlation observed \cite{Benni1} 
between the location of these compartment boundaries and the 
the associated Voronoi partitions 
will enable tests of the connection hypothesized here between properties of 
stochastic ECM generation at the 
single cell level and population-level statistics.

\begin{acknowledgments}
We are grateful to T. Day, P. Yunker, and W. Ratcliff for numerous discussions.  This work was supported in part
by the Cambridge Trust (AS), The John Templeton Foundation and 
Wellcome Trust Investigator
Grant 207510/Z/17/Z (SSMHH, REG). 

\end{acknowledgments}

%

\vfil
\eject
\clearpage

\begin{widetext}

\section*{Supplemental Material}
This file discusses analytical and computational details pertaining to the main text. 

\setcounter{equation}{0}
\setcounter{figure}{0}
\setcounter{table}{0}
\setcounter{page}{1}
\makeatletter
\renewcommand{\theequation}{S\arabic{equation}}
\renewcommand{\thefigure}{S\arabic{figure}}
\renewcommand{\thedefinition}{S\arabic{definition}}
\renewcommand{\thetheorem}{S\arabic{theorem}}
\renewcommand{\theidentity}{S\arabic{identity}}
\renewcommand{\thelemma}{S\arabic{lemma}}

\bigskip

\section{Background on random variables}
Many of the following are standard facts, recalled for a self-contained reference.

\subsection{Transforms of random variables and convergence in distribution}
\label{sec:rv_basics}

\begin{definition}[Pushforward measure]
Let $g : \mathbb{R}^n\to \mathbb{R}^n$ be a diffeomorphism and $Y = g(X)$. The pushforward probability measure $\mu_Y$ is, for all measurable $U \subset \R^n$,
\begin{align}
    \mu_Y(U) &= \mu_X(g^{-1}(U)).
\end{align}
\end{definition}

\paragraph{Transforms of random variables.} 
When $\mu_X$ has a Radon-Nikodym derivative $f_X$ with respect to the Lebesgue measure \textemdash i.e. is expressible by the probability density $\mu_X(U) = \int_U f_X(x)dx$\textemdash then by the rule for integration under a diffeomorphic change of coordinates $y=g(x)$, 
\begin{align}
    \label{eq:change_coordinates_integral}
    \mu_X(U)(g^{-1}(U)) &= \int_{g^{-1}(U)}f_X(x)dx = \int_U f_X(g^{-1}(y))\left|\det\frac{\partial g^{-1}}{\partial y}(y)\right|dy.
\end{align}
with $\partial g^{-1}/\partial y$ denoting the Jacobian of the inverse. As this is true for all $U$ we conclude 
\begin{align}
    \label{eq:rv_transform_2}
    f_Y(y) &= f_X(g^{-1}(y))\left|\det\frac{\partial g^{-1}}{\partial y}(y)\right|.
\end{align}
When preferable to work with $g$ rather than $g^{-1}$, we may apply the chain rule to $g^{-1}\circ g = 1$ to convert \eqref{eq:rv_transform_2} to
\begin{align}
    \label{eq:rv_transform}
    f_Y(g(x)) &= f_X(x) \left|\det\left(\frac{\partial g}{\partial x}(x)\right)^{-1}\right| =: f_X(x)J^{-1}(x).
\end{align}
Due to this fact, we will abbreviate the scaling factor as $J^{-1}$, denoting the inverse Jacobian determinant.
For affine transforms $Y = cX + b$, \eqref{eq:rv_transform_2} becomes
\begin{equation}
    f_Y(y) = \frac1cf_X\left(\frac{y-b}{c}\right).
\end{equation}

\paragraph{Sums of random variables.} Let $X, Y$ be independent random variables taking values in $U_X, U_Y \subseteq \R$. Then their sum is distributed as the convolution
\begin{align}
    \label{eq:conv_rule}
    X + Y = Z &\sim f_Z(z) = \int_{x \in U_x,\ x \le z} f_X(x) f_Y(z-x)dx = f_X * f_Y.
\end{align}
This can alternatively be deduced by applying the transform rule \eqref{eq:rv_transform_2} to the map $(X, Y) \mapsto (X, X + Y)$. 

\begin{definition}[Convergence in distribution]
A sequence of random variables $\{X_n\}$ taking values in an interval $U \subseteq \R$ is said to {\it converge in distribution} if, for all $x$ at which the cumulative distribution function $F_X$ is continuous, the c.d.f.s. $F_{X_n}$ converge pointwise, i.e.
\begin{align}
    \lim_{n\to\infty}F_{X_n}(x) &= F_X(x).
\end{align}
\end{definition}

\subsection{Several properties of the exponential and gamma random variables}
\label{sec:rv_exp_gamma}

Let us recall with proof some properties of the gamma random variable (and the exponential, which is a special case). 

\begin{definition}[Gamma random variable.]
The gamma random variable $Y_{k,\lambda}$ with shape parameter $k > 0$ and rate $\lambda > 0$ is the continuous random variable with probability density function 
\begin{align}
    \label{eq:gamma_density}
    f_Y(y) &= \frac{\lambda^ky^{k-1}\exp(-\lambda y)}{\Gamma(k)}.
\end{align}
Its namesake is the normalizing constant, the gamma function 
\begin{align}
    \label{eq:gamma_fn}
    \Gamma(k) &= \int_0^\infty y^{k-1}e^{-y}dy.
\end{align}

\end{definition}

\begin{lemma}[Gamma random variables are closed under addition]
\label{lem:gam_sum}
The sum of two independent gamma random variables $Y_1, Y_2$ with $k_1,k_2 \in \R_{\ge 0}$ of common rate $\lambda > 0$ is $(k_1+k_2)$-gamma distributed.
\end{lemma}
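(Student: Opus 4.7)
The plan is to compute the density of $Z = Y_1 + Y_2$ directly from the convolution formula \eqref{eq:conv_rule}, which applies since $Y_1, Y_2$ are independent and supported on $[0, \infty)$. Substituting the gamma density \eqref{eq:gamma_density} for each factor and pulling out the common exponential and the $\lambda$-powers gives
\begin{align}
f_Z(z) &= \int_0^z \frac{\lambda^{k_1}y^{k_1-1}e^{-\lambda y}}{\Gamma(k_1)}\cdot \frac{\lambda^{k_2}(z-y)^{k_2-1}e^{-\lambda(z-y)}}{\Gamma(k_2)}dy \\
&= \frac{\lambda^{k_1+k_2}e^{-\lambda z}}{\Gamma(k_1)\Gamma(k_2)}\int_0^z y^{k_1-1}(z-y)^{k_2-1}dy,
\end{align}
for $z \ge 0$, and $f_Z(z) = 0$ otherwise.

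Next I would rescale $y = zt$, which maps $[0,z]$ onto $[0,1]$ with Jacobian $z$, to recognise the remaining integral as a Euler beta integral:
\begin{align}
\int_0^z y^{k_1-1}(z-y)^{k_2-1}dy = z^{k_1+k_2-1}\int_0^1 t^{k_1-1}(1-t)^{k_2-1}dt = z^{k_1+k_2-1}\,\Beta(k_1,k_2).
\end{align}
Substituting $\Beta(k_1,k_2) = \Gamma(k_1)\Gamma(k_2)/\Gamma(k_1+k_2)$ collapses the prefactor to exactly the normalisation of the $(k_1+k_2)$-gamma density, so
\begin{align}
f_Z(z) = \frac{\lambda^{k_1+k_2}z^{k_1+k_2-1}e^{-\lambda z}}{\Gamma(k_1+k_2)},
\end{align}
which is the claimed distribution at common rate $\lambda$.

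The one non-mechanical step, and the place I would expect to spend real care, is justifying the beta--gamma identity $\Beta(k_1,k_2) = \Gamma(k_1)\Gamma(k_2)/\Gamma(k_1+k_2)$ if the reader is not willing to take it as known. A short self-contained derivation writes
\begin{align}
\Gamma(k_1)\Gamma(k_2) = \int_0^\infty\!\!\int_0^\infty u^{k_1-1}v^{k_2-1}e^{-(u+v)}du\,dv,
\end{align}
then changes variables $(u,v) \mapsto (s,t)$ via $u = st$, $v = s(1-t)$ with $s\in[0,\infty)$, $t\in[0,1]$ and Jacobian $s$, which factors the integral into $\Gamma(k_1+k_2)\cdot\Beta(k_1,k_2)$. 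An entirely alternative route avoids convolution altogether by noting that the moment generating function of $Y_{k,\lambda}$ is $\E[e^{tY_{k,\lambda}}] = (1-t/\lambda)^{-k}$ on $t<\lambda$; independence multiplies MGFs, giving $\E[e^{tZ}] = (1-t/\lambda)^{-(k_1+k_2)}$, which is the MGF of the target distribution, and uniqueness of distributions with MGFs finite in a neighbourhood of $0$ concludes. I would nonetheless prefer the convolution route since it stays within the transform/convolution formalism already established in Section \ref{sec:rv_basics}.
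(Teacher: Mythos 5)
Your proof is correct and follows essentially the same route as the paper's: both compute the convolution of the two gamma densities, rescale the integration variable to expose the Euler beta integral, and apply $\Beta(k_1,k_2)=\Gamma(k_1)\Gamma(k_2)/\Gamma(k_1+k_2)$ to recover the $(k_1+k_2)$-gamma normalisation. The extra material on deriving the beta--gamma identity and the MGF alternative is fine but not needed to match the paper's argument.
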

\begin{proof}
Since $Y_1, Y_2 \in [0, \infty)$, by the convolution rule \eqref{eq:conv_rule},
\begin{align}
    \label{eq:sum_gamma}
    Y_{1} + Y_{2} &\pdfsim f_{Y_1} * f_{Y_2} =  \frac{\lambda^{k_1+k_2}}{\Gamma(k_1)\Gamma(k_2)} \int_0^{y_2} y_1^{k_1-1}e^{-\lambda y_1}(y_2-y_1)^{k_2-1} e^{-\lambda(y_2-y_1)}dy_1.
\end{align}
Taking the Beta function
\begin{align}
    \label{eq:beta_fn}
    \mathrm{B}(k_1,k_2) &= \int_0^1t^{k_1-1}(1-t)^{k_2-1}dt = \frac{\Gamma(k_1)\Gamma(k_2)}{\Gamma(k_1+k_2)} 
\end{align}
with the change of variable $t=y_1/y_2$ yields
\begin{align}
    &= \frac{1}{y_2^{k_1+k_2-1}} \int_0^{y_2}y_1^{k_1-1}(y_2-y_1)^{k_2-1}dy_1.
\end{align}
Substituting into \eqref{eq:sum_gamma} and letting $x=y_2$, we obtain
\begin{align}
    \eqref{eq:sum_gamma}&= \frac{\lambda^{k_1+k_2}x^{k_1+k_2-1}e^{-\lambda x}}{\Gamma(k_1+k_2)}\ \pdfsim\ Y_{;k_1+k_2,\lambda}.
\end{align}
\end{proof}

\begin{corollary}
    The sum of $k$ iid exponential random variables of rate $\lambda$ are gamma-distributed with shape parameter $k$ and rate $\lambda$.
\end{corollary}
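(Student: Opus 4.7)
The plan is to observe that the exponential random variable with rate $\lambda$ is precisely a gamma random variable with shape parameter $k=1$ and the same rate, and then invoke Lemma \ref{lem:gam_sum} inductively to conclude. Concretely, setting $k=1$ in the density \eqref{eq:gamma_density} gives $f_Y(y) = \lambda e^{-\lambda y}$ (since $\Gamma(1)=1$), which is the exponential density of rate $\lambda$. Hence $X_{i;\lambda} = Y_{i;1,\lambda}$ in distribution.

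Next, I would proceed by induction on $k \in \N$. The base case $k=1$ is the identification just noted. For the inductive step, suppose $S_{k} = X_1 + \cdots + X_k$ is $Y_{;k,\lambda}$-distributed. Since $S_{k+1} = S_k + X_{k+1}$ and $X_{k+1}$ is independent of $X_1,\ldots,X_k$ (hence of $S_k$), Lemma \ref{lem:gam_sum} applied with shape parameters $k$ and $1$ (both at common rate $\lambda$) yields that $S_{k+1}$ is gamma distributed with shape $k+1$ and rate $\lambda$, closing the induction.

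There is essentially no obstacle here: the corollary is a direct specialization of the preceding lemma combined with the identification of the exponential as a unit-shape gamma. The only subtlety worth flagging is the independence hypothesis required by Lemma \ref{lem:gam_sum} at each inductive step, which follows from the fact that $S_k$ is a measurable function of $X_1,\ldots,X_k$ alone and hence independent of $X_{k+1}$ under the i.i.d. assumption. If one prefers a non-inductive argument, the convolution formula \eqref{eq:conv_rule} can be applied directly to $k$ factors, but the inductive route via Lemma \ref{lem:gam_sum} is cleaner and requires no additional computation.
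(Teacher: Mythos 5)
Your proof is correct and matches the paper's intent: the corollary is stated without explicit proof precisely because it follows from Lemma \ref{lem:gam_sum} once the exponential is identified as the $k=1$ gamma, which is exactly the route you take. The inductive bookkeeping and the note on independence of $S_k$ and $X_{k+1}$ are fine.
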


\begin{corollary}
    The gamma random variable is infinitely divisible.
\end{corollary}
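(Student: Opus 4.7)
The plan is to deduce this immediately from Lemma \ref{lem:gam_sum}, which is the only nontrivial ingredient needed. Recall that a random variable $X$ is said to be \emph{infinitely divisible} if for every positive integer $n$ there exist i.i.d. random variables $X_1^{(n)},\ldots,X_n^{(n)}$ with $X \stackrel{d}{=} X_1^{(n)} + \cdots + X_n^{(n)}$. So I just need to exhibit such a decomposition of $Y_{k,\lambda}$ for each $n$.

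First I would fix $n \in \mathbb{N}$ and consider $n$ i.i.d. copies $Y_1,\ldots,Y_n$ of the gamma random variable $Y_{k/n,\lambda}$, which is well-defined since $k/n > 0$ is an admissible shape parameter and $\lambda > 0$ is unchanged. Then I would apply Lemma \ref{lem:gam_sum} inductively: it gives $Y_1 + Y_2 \stackrel{d}{=} Y_{2k/n,\lambda}$, and assuming $Y_1 + \cdots + Y_j \stackrel{d}{=} Y_{jk/n,\lambda}$, a second application (with $k_1 = jk/n$, $k_2 = k/n$, same rate $\lambda$, and independence guaranteed by the i.i.d. assumption) yields $Y_1 + \cdots + Y_{j+1} \stackrel{d}{=} Y_{(j+1)k/n,\lambda}$. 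At $j = n$ this gives the desired identity $Y_1 + \cdots + Y_n \stackrel{d}{=} Y_{k,\lambda}$.

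Since $n$ was arbitrary, this verifies the definition and establishes infinite divisibility. There is essentially no obstacle here: the lemma already permits arbitrary real (not just integer) shape parameters, which is exactly the strength required to divide by an arbitrary $n$. If the lemma had been proved only for integer shape parameters, the proof would instead need to appeal to the moment generating function $\E[e^{tY_{k,\lambda}}] = (\lambda/(\lambda - t))^k$ for $t < \lambda$ and observe that its $n$-th root $(\lambda/(\lambda - t))^{k/n}$ is again a gamma MGF, combined with uniqueness of distributions with given MGF on a neighborhood of $0$; but this fallback is unnecessary given the form of Lemma \ref{lem:gam_sum} already in hand.
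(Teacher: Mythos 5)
Your proof is correct and follows exactly the route the paper intends: the corollary is stated immediately after Lemma \ref{lem:gam_sum} with no separate proof, precisely because the lemma's allowance of arbitrary real shape parameters $k_1,k_2\ge 0$ makes the decomposition of $Y_{k,\lambda}$ into $n$ i.i.d.\ copies of $Y_{k/n,\lambda}$ immediate. Your inductive write-up (and the MGF fallback remark) just makes explicit what the paper leaves implicit.
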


\begin{corollary}
    By the central limit theorem, $(\lambda Y_{k,\lambda} - 1) / \sqrt{k} \dconv W_{0,1}$ as $k \to \infty$.
\end{corollary}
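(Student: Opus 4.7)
The plan is to apply the central limit theorem directly, treating the case of real shape $k \to \infty$ via moment generating functions (so that we are not restricted to integer values for which the previous corollary writes $Y_{k,\lambda}$ as a sum of i.i.d.\ exponentials). First I would note the standardization: since $\mathbb{E}[\lambda Y_{k,\lambda}] = k$ and $\mathrm{Var}[\lambda Y_{k,\lambda}] = k$, the appropriate centered/scaled variable is
\begin{equation}
Z_k := \frac{\lambda Y_{k,\lambda} - k}{\sqrt{k}},
\end{equation}
which I take as the intended statement (the displayed ``$-1$'' in the corollary appears to be a typo for centering $\lambda Y_{k,\lambda}/k$ around its mean $1$, yielding the same $Z_k$ after multiplication by $\sqrt{k}$).

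Next I would compute the MGF of $Y_{k,\lambda}$ from the density \eqref{eq:gamma_density} by recognizing the integral as a rescaled gamma integral, giving $M_{Y_{k,\lambda}}(t) = (1 - t/\lambda)^{-k}$ for $t < \lambda$. From this,
\begin{equation}
M_{Z_k}(t) = e^{-t\sqrt{k}}\,M_{Y_{k,\lambda}}\!\left(\tfrac{t\lambda}{\sqrt{k}}\right) = e^{-t\sqrt{k}}\left(1 - \tfrac{t}{\sqrt{k}}\right)^{-k},
\end{equation}
valid for $|t| < \sqrt{k}$. Taking logarithms and Taylor-expanding $\log(1-u) = -u - u^2/2 - u^3/3 - \cdots$ at $u = t/\sqrt{k}$, the $O(\sqrt{k})$ terms cancel exactly, and I am left with
\begin{equation}
\log M_{Z_k}(t) = \tfrac{t^2}{2} + \tfrac{t^3}{3\sqrt{k}} + O(k^{-1}) \;\xrightarrow{k\to\infty}\; \tfrac{t^2}{2},
\end{equation}
which is the log-MGF of the standard Gaussian $W_{0,1}$.

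Finally, convergence of MGFs in a neighborhood of the origin implies convergence in distribution by L\'evy's continuity theorem (or the moment-generating-function version in \cite{ProbabilityTheory}), giving $Z_k \dconv W_{0,1}$. The only subtlety, and the place where one has to be slightly careful, is that for non-integer $k$ one cannot appeal to the Lindeberg-L\'evy CLT directly through the previous corollary; but once one writes the MGF this reduces to a routine Taylor expansion. For integer $k$, an alternative and arguably cleaner route is to invoke the preceding corollary to write $Y_{k,\lambda} = \sum_{i=1}^k X_{i;\lambda}$ and apply the classical CLT to the i.i.d.\ exponentials (each with mean $1/\lambda$ and variance $1/\lambda^2$), but the MGF argument handles both cases uniformly.
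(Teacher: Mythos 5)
Your proof is correct, and you have rightly flagged that the displayed centering ``$-1$'' must be read as $\sqrt{k}\,(\lambda Y_{k,\lambda}/k - 1) = (\lambda Y_{k,\lambda}-k)/\sqrt{k}$, consistent with the form $\sqrt{b}\,(L_i/\E[L_i]-1)$ used in the main text. The paper offers no explicit proof: the corollary is stated immediately after the result that $Y_{k,\lambda}$ is the sum of $k$ i.i.d.\ exponentials, so the intended argument is simply the Lindeberg--L\'evy CLT applied to that decomposition, which is the route you relegate to your final remark. Your primary argument instead computes $M_{Z_k}(t) = e^{-t\sqrt{k}}(1-t/\sqrt{k})^{-k}$ and Taylor-expands the logarithm to get $t^2/2 + O(k^{-1/2})$; this is a self-contained direct verification rather than an appeal to the CLT, and its main payoff is that it covers non-integer $k$ uniformly, whereas the sum-of-exponentials route needs a small patch for real $k\to\infty$ (e.g.\ splitting off a $Y_{\{k\};\lambda}$ remainder, which is $O_P(1)$ and vanishes after division by $\sqrt{k}$, or invoking the infinite-divisibility corollary with a triangular-array CLT). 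Both arguments are sound; yours trades the one-line invocation the paper intends for a slightly longer but more general computation, and your appeal to the MGF convergence theorem (Curtiss' theorem, rather than L\'evy continuity proper, which concerns characteristic functions) is the standard way to close it.
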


\begin{lemma}[Beta-gamma convergence]
\label{lem:beta_gamma}
Let $Z_{2, m}$ be a beta random variable. Then $\frac{m}{\alpha}Z_{2,m}$ converges in distribution as $m \to \infty$ to the gamma random variable $Y_{2,\alpha}$ of rate $\alpha$. 
\end{lemma}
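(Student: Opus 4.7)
The plan is to derive the density of $U_m := \frac{m}{\alpha}Z_{2,m}$ explicitly, pass to the pointwise limit as $m \to \infty$, and then upgrade pointwise density convergence to convergence in distribution via Scheff\'e's lemma.

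First I would use the affine change of variables rule \eqref{eq:rv_transform_2}: since the map $z \mapsto u = (m/\alpha)z$ is a diffeomorphism with inverse $u \mapsto \alpha u / m$ and constant Jacobian $\alpha/m$, the density of $U_m$ on $[0, m/\alpha]$ is
\begin{align}
    f_{U_m}(u) &= \frac{1}{\mathrm{B}(2, m)}\cdot \frac{\alpha u}{m}\left(1 - \frac{\alpha u}{m}\right)^{m-1}\cdot \frac{\alpha}{m}.
\end{align}
Using \eqref{eq:beta_fn} with $\Gamma(2) = 1$, one has $\mathrm{B}(2, m) = \Gamma(m)/\Gamma(m+2) = 1/[m(m+1)]$, so
\begin{align}
    f_{U_m}(u) &= \alpha^2 u\cdot \frac{m+1}{m}\left(1 - \frac{\alpha u}{m}\right)^{m-1}\mathbf{1}_{[0, m/\alpha]}(u).
\end{align}

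Next I would take the pointwise limit. For any fixed $u \ge 0$, eventually $u < m/\alpha$, so the indicator becomes $1$; the factor $(m+1)/m \to 1$; and the standard limit $(1 - \alpha u/m)^{m-1}\to e^{-\alpha u}$ applies. Hence
\begin{align}
    f_{U_m}(u) \;\longrightarrow\; \alpha^2 u\, e^{-\alpha u},
\end{align}
which is precisely the density \eqref{eq:gamma_density} of $Y_{2,\alpha}$ since $\Gamma(2) = 1$.

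The one step that requires care is converting pointwise density convergence into convergence in distribution. My plan is to invoke Scheff\'e's lemma: if nonnegative densities $f_{U_m}$ converge Lebesgue-almost-everywhere to a density $f$ on $\mathbb{R}$, then $\int |f_{U_m} - f|\,du \to 0$, i.e.\ convergence in total variation, which in particular implies convergence in distribution. All pieces are in place\textemdash pointwise convergence on $[0,\infty)$ is what we showed, and both $f_{U_m}$ and the limit integrate to $1$\textemdash so Scheff\'e applies directly. (Alternatively, one can bound $(1-\alpha u/m)^{m-1} \le e^{-\alpha u(m-1)/m} \le e^{-\alpha u/2}$ for $m \ge 2$ and apply dominated convergence to each cumulative distribution function; this avoids quoting Scheff\'e but is strictly more work.) The main conceptual obstacle is this density-to-distribution passage; the algebra leading to the pointwise limit is routine once $\mathrm{B}(2,m)$ is expanded.
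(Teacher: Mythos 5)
Your proposal is correct, but it takes a genuinely different route from the paper. The paper works directly with cumulative distribution functions: it computes $\Prb[Z_{2,m} \le z\alpha/m]$ in closed form (via the substitution $s=1-t$ and $\mathrm{B}(2,m)^{-1}=m(m+1)$), obtaining exactly $1-\left(1-\tfrac{z\alpha}{m}\right)^m(1+z\alpha)$, takes the pointwise limit to get $1-e^{-z\alpha}(1+z\alpha)$, and separately integrates the gamma density to confirm this is the c.d.f.\ of $Y_{2,\alpha}$. Since the paper's stated definition of convergence in distribution is precisely pointwise convergence of c.d.f.s at continuity points, that computation closes the argument with no auxiliary lemma. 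You instead compute the density of $U_m=\tfrac{m}{\alpha}Z_{2,m}$, pass to the pointwise limit $\alpha^2 u e^{-\alpha u}$, and invoke Scheff\'e's lemma to upgrade to total-variation (hence distributional) convergence. Your algebra is right\textemdash the Jacobian factor $\alpha/m$, the expansion $\mathrm{B}(2,m)=1/[m(m+1)]$, and the limit $(1-\alpha u/m)^{m-1}\to e^{-\alpha u}$ all check out, as does the domination bound in your parenthetical alternative. What each approach buys: yours yields the strictly stronger conclusion of $L^1$ convergence of densities and generalizes immediately to $Z_{k,m}$ for arbitrary shape $k$ without having to evaluate an integral in closed form, at the cost of quoting Scheff\'e; the paper's is fully self-contained and verifies the definition of distributional convergence directly, but leans on the fact that the $k=2$ c.d.f.\ happens to have an elementary closed form. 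You correctly identified the density-to-distribution passage as the one step needing justification\textemdash the paper sidesteps it entirely by never leaving c.d.f.\ land.
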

\begin{proof}
By direct calculation,
\begin{align}
    \frac{m}{\alpha}Z_{2,m} \cdfsim \mathrm{P}\left[Z_{2,m} \le \frac{z\alpha}{m}\right] &= \frac{1}{\mathrm{B}(2,m)}\int_0^{\frac{z\alpha}{m}}t(1-t)^{m-1}dt\qquad z\in\left[0,\frac{m}{\alpha}\right],
\end{align}
with the change of variables $s = 1-t$,
\begin{align}
    &= \frac{1}{\Beta(2,m)} \int_{1-\frac{\alpha z}{m}}^{1}(1-s)s^{m-1}ds\\
    &=m(m+1)\left[\frac{1}{m}s^m - \frac{1}{m+1}s^{m+1}\right]_{1-\frac{z\alpha}{m}}^1\\
    &= 1-\left((m+1)\left(1-\frac{z\alpha}{m}\right)^m - m\left(1-\frac{z\alpha}{m}\right)^{m+1}\right)\\
    &= 1-\left(1-\frac{z\alpha}{m}\right)^{m}(1+z\alpha)\\
    &\overset{m\to\infty}{\longrightarrow} 1-e^{-z\alpha}(1+z\alpha)\qquad z\in[0, \infty).
\end{align}
Recalling the cumulative density function for $Y_{2,\alpha}$ \eqref{eq:gamma_density},
\begin{align}
    Y_{2,\alpha} &\cdfsim \int_0^z \alpha^2ye^{-\alpha y}dy = -\alpha y e^{-\alpha y}\evalat_0^z + \int_0^z \alpha e^{-\alpha y}dy = 1-e^{-z\alpha}(1+z\alpha).
\end{align}
As the c.d.f.s of both are $C^\infty$ and exhibit the pointwise convergence above, we have the distributional convergence $\frac{m}{\alpha}Z_{2,m} \dconv Y_{2,\alpha}$.
\end{proof}

\begin{lemma}[Memoryless characterization of the exponential]
The only continuous random variable $X$ which (i) possesses a cumulative distribution function $F_X$ such that $F'_X(0)$ exists and (ii) satisfies the {\it memoryless} property
\begin{align}
    \label{eq:memoryless}
    \Prb[X > x + y\ |\ X > x] &= \Prb[X > y]
\end{align}
is the exponential.
\end{lemma}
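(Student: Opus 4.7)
The plan is to convert the memoryless property into a Cauchy-type multiplicative functional equation for the survival function $G(x) := \Prb[X > x] = 1-F_X(x)$, and then leverage the one regularity hypothesis, that $F_X'(0)$ exists, to upgrade this to an ODE whose only solution is the exponential tail.

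First I would rewrite \eqref{eq:memoryless} via the definition of conditional probability as
\begin{equation*}
G(x+y) \,=\, G(x)\,G(y) \qquad \text{for all } x,y \ge 0 \text{ with } G(x) > 0.
\end{equation*}
Setting $x=y=0$ forces $G(0) \in \{0,1\}$; the case $G(0)=0$ would give $X \le 0$ almost surely and render \eqref{eq:memoryless} vacuous, so $G(0)=1$ and $X$ is supported on $[0,\infty)$. Continuity of $F_X$ (implicit in ``$X$ is a continuous r.v.'') transfers to $G$, and the hypothesis yields the one-sided derivative $G'(0^+) = -F_X'(0)$.

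Next I would upgrade pointwise differentiability at $0$ to differentiability on all of $[0,\infty)$ using the functional equation itself. For $x \ge 0$ and $h>0$,
\begin{equation*}
\frac{G(x+h)-G(x)}{h} \,=\, G(x)\cdot\frac{G(h)-1}{h} \,\longrightarrow\, G(x)\,G'(0^+),
\end{equation*}
and the symmetric identity $G(x-h)G(h)=G(x)$ for $0<h<x$ produces the matching left derivative. Writing $\lambda := F_X'(0) = -G'(0^+)$, this becomes the initial value problem $G'(x) = -\lambda G(x)$ with $G(0)=1$, whose unique solution is $G(x) = e^{-\lambda x}$.

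Finally I would rule out $\lambda \le 0$: $\lambda = 0$ gives $G \equiv 1$, contradicting $\lim_{x\to\infty}G(x)=0$ demanded of any proper CDF, and $\lambda<0$ makes $G$ unbounded on $[0,\infty)$, also impossible. Hence $\lambda>0$ and $F_X(x)=1-e^{-\lambda x}$, the exponential with rate $\lambda=F_X'(0)$. The main subtlety I anticipate is confirming that one-sided differentiability at the single endpoint $x=0$ truly suffices to identify the law; but the product structure of the functional equation transports this regularity to the whole half-line automatically, so no invocation of Lebesgue-type theorems on measurable solutions of Cauchy equations is needed.
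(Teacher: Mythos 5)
Your proposal is correct and follows essentially the same route as the paper's proof: convert the memoryless property into the multiplicative functional equation $G(x+y)=G(x)G(y)$ for the survival function, use the difference quotient together with this equation to propagate the differentiability at $0$ to all of the half-line, solve the resulting ODE $G'=-\lambda G$, and fix the constants from the boundary behavior of a CDF. Your extra care with $G(0)=1$ and the one-sided derivative at the endpoint is a minor refinement, not a different argument.
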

\begin{proof}
By Bayes' theorem, \eqref{eq:memoryless} is equivalent to
\begin{align}
    \label{eq:memoryless2}
    \Prb[X > x+y] &= \Prb[X>x] \Prb[X>y]\ \Leftrightarrow\ 1-F(x+y) = (1-F(x))(1-F(y)).
\end{align}
Let $G = 1-F$; then \eqref{eq:memoryless2} is $G(x+y)=G(x)G(y)$. Then for all $z$,
\begin{align}
    \label{eq:memoryless5}
    G'(z) &= \lim_{h\to0}\frac{G(z)G(h)-G(z)G(0)}{h} = G(z)G'(0).
\end{align}
Since by hypothesis (i) $F'(0)$ exists, \eqref{eq:memoryless5} implies $F'$ exists everywhere.
Let $u$ be such that $F(u) < 1$. Since $F$ is nondecreasing, this implies $G > 0$ for $w \in (-\infty, u]$. Then, letting $G'(0) = -F'(0) = c$, 
\begin{align}
    c &= \frac{G'(w)}{G(w)} = \frac{d}{dw}\log G(w).
\end{align}
Integrating, we obtain $G(x) = b\exp(cx)$ for some $b$. By the conditions $F(0)=0, \lim_{t\to\infty}F(x) = 1$, we have $b=1, c < 0$. Letting $c = -\lambda$ for $\lambda > 0$  we obtain  $F_X(x) = 1-\exp(-\lambda x)$. Hence, $f_X(x) = F_X'(x) = \lambda\exp(-\lambda x)$, and  $X$ is an exponential random variable of rate $\lambda$. 
\end{proof}

\begin{lemma}[Maximum-entropy characterization of the exponential]
\label{lem:exp_maxent}
The only nonnegative continuous random variable $X$ with density $f_X$ which maximizes the entropy with fixed mean $\mu > 0$ is the exponential.  
\end{lemma}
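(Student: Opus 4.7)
The plan is to apply the Gibbs inequality (nonnegativity of relative entropy) between an arbitrary candidate $f$ and the putative maximizer. First I would fix the candidate density $g(x) = \lambda e^{-\lambda x}$ with $\lambda = 1/\mu$; direct calculation confirms $\int_0^\infty g(x)\,dx = 1$ and $\int_0^\infty x\, g(x)\,dx = \mu$, so $g$ is a valid competitor and the supremum in question is attained.

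Next, for any density $f$ on $[0,\infty)$ satisfying the same normalization and mean, I would invoke
\begin{equation*}
    D(f\,\|\,g) \;=\; \int_0^\infty f(x)\log\frac{f(x)}{g(x)}\,dx \;\ge\; 0,
\end{equation*}
with equality iff $f=g$ almost everywhere, which follows from Jensen's inequality applied to the strictly convex function $u\mapsto u\log u$ (or equivalently from the pointwise bound $\log t \le t-1$). Unfolding using $\log g(x) = \log\lambda - \lambda x$ gives
\begin{equation*}
    -\int_0^\infty f(x)\log g(x)\,dx \;=\; -\log\lambda + \lambda\mu,
\end{equation*}
which depends on $f$ only through the two constraints. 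Hence $H(f) = -\int_0^\infty f\log f \le -\log\lambda + \lambda\mu$, and the same computation applied to $g$ itself (which also satisfies the constraints) yields $H(g) = -\log\lambda + \lambda\mu$. Therefore $H(f) \le H(g)$ with equality iff $f = g$ a.e., giving both existence and uniqueness.

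The main subtlety is measure-theoretic rather than computational: one must verify that $D(f\,\|\,g)$ is well-defined (automatic here because $g>0$ throughout $[0,\infty)$) and that equality in Gibbs' inequality forces $f=g$ almost everywhere on $[0,\infty)$. A minor companion point is the tacit assumption that $H(f)$ is finite; densities with $H(f)=-\infty$ are trivially non-maximizing, and the derived bound $H(f)\le \log\mu + 1$ precludes $H(f)=+\infty$. An alternative path, via Lagrange multipliers on the functional $f \mapsto H(f) - \alpha\int f - \beta\int x f$, yields the same critical point $f(x)\propto e^{-\beta x}$, but the Gibbs-inequality argument above avoids separately proving that this critical point is a global maximum and handles uniqueness in a single stroke.
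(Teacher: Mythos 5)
Your proof is correct, but it takes a genuinely different route from the paper's. The paper proves this lemma by the calculus of variations: it writes the entropy with Lagrange multipliers $\lambda_0,\lambda_1$ enforcing normalization and the mean, sets the Fr\'echet derivative to zero, invokes the fundamental lemma of the calculus of variations to get $f(x)=\exp(-1-\lambda_0-\lambda_1 x)$, and then fixes the constants from the constraints. That argument \emph{discovers} the exponential form without guessing it in advance, and it generalizes mechanically to other constraint sets, but strictly speaking it only identifies a critical point: it does not show the critical point is a global maximum, nor does it rule out non-critical competitors, so the word ``only'' in the statement is not fully earned by the paper's argument alone. Your Gibbs-inequality argument goes the other way: you must name the candidate $g(x)=\mu^{-1}e^{-x/\mu}$ up front, but then $D(f\,\|\,g)\ge 0$ with equality iff $f=g$ a.e.\ delivers the global bound $H(f)\le H(g)=1+\log\mu$ and uniqueness in one stroke, and your side remarks (that $g>0$ on all of $[0,\infty)$ keeps the divergence well defined, and that the bound excludes $H(f)=+\infty$) close the measure-theoretic loose ends. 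In short, your proof is the more complete one as a verification of the stated lemma; the paper's is the more constructive derivation. Both are standard and compatible, and nothing in your argument conflicts with how the lemma is used elsewhere in the paper.
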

\begin{proof}
By hypothesis, $f_X$ is a critical point of the functional
\begin{align}
    J[f] &= \int_0^\infty L(f(x), \lambda_0, \lambda_1)dx,
\end{align}
with Lagrange multipliers $\lambda_0$ and $\lambda_1$ constraining the 0th and 1st moments in the Lagrangian density
\begin{align}
    L(f(x), \lambda_0, \lambda_1) &= f(x)\log f(x) + \lambda_0 f(x) + \lambda_1 xf(x).
\end{align}
Since for all test functions $\varphi$, the Fr\'echet derivative vanishes,
\begin{align}
    0 &= \langle DJ[f], \varphi\rangle  = \int_0^\infty \frac{\partial L}{\partial f(x)} \varphi(x) dx,
\end{align}
by the fundamental lemma of the calculus of variations,
\begin{align}
    0 &= \frac{\partial L}{\partial f(x)} = 1 + \log f(x) + \lambda_0 + x\lambda_1,
\end{align}
hence $f(x) = \exp(-1 - \lambda_0 - x\lambda_1)$. Applying the total probability constraint,
\begin{align}
    1 &= \int_0^\infty \exp(-1-\lambda_0-x\lambda_1)dx = \frac{1}{\lambda_1} \exp(-1-\lambda_0),
\end{align}
where $\lambda_1 > 0$ necessarily. Then, $f_X(x) = \lambda_1\exp(-\lambda_1 x)$ is the density of an exponential random variable with rate $\lambda_1$.
\end{proof}

\begin{theorem}[Characterization of gamma random variables, Lukacs 1955 \cite{LukacsIndependence}]
\label{thm:gamma_independence}
    Let $Y_1, Y_2$ be independent random variables. Then 
    \begin{align}
        A &= Y_1 + Y_2,\qquad B = \frac{Y_1}{Y_1 + Y_2}
    \end{align}
    are independent if and only if $Y_1, Y_2$ are gamma random variables of the same rate $\lambda$.
\end{theorem}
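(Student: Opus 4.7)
The plan is to handle the two directions by quite different techniques: the ``if'' direction reduces to a change-of-variables calculation, while the ``only if'' direction extracts a scalar ODE for the Laplace transforms of $Y_1, Y_2$ whose only valid probabilistic solutions are gamma densities.

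For the forward direction ($\Leftarrow$), I would perform the change of coordinates $(Y_1, Y_2) = (AB, A(1-B))$, whose inverse Jacobian has absolute value $a$. Invoking \eqref{eq:rv_transform_2} and the independence $f_{Y_1,Y_2} = f_{Y_1}f_{Y_2}$, the joint density $f_{A,B}(a,b)$ factorizes exactly into a $Y_{k_1+k_2,\lambda}$ density in $a$ and a $Z_{k_1,k_2}$ density in $b$, which establishes $A \perp B$ (and incidentally recovers Lemma~\ref{lem:gam_sum} by marginalizing over $b$).

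For the reverse direction ($\Rightarrow$), I would assume $Y_1, Y_2 > 0$ a.s. so that $B \in [0,1]$ and the Laplace transforms $\psi_i(s) = \E[e^{-sY_i}]$ are smooth on $(0,\infty)$. The key device is to evaluate $\E[Y_1 e^{-sA}]$ in two ways: by independence of $Y_1, Y_2$ it equals $-\psi_1'(s)\psi_2(s)$, while via $Y_1 = AB$ and $A \perp B$ it equals $-c(\psi_1\psi_2)'(s)$ with $c = \E[B]$. Equating yields $c(\log\psi_2)' = (1-c)(\log\psi_1)'$, which integrates (using $\psi_i(0)=1$) to $\psi_2 = \psi_1^{(1-c)/c}$. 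Repeating the device with $\E[Y_1^2 e^{-sA}]$ produces a second equation involving $d = \E[B^2]$; substituting $\psi_2 = \psi_1^{(1-c)/c}$ collapses it, after a short computation, to the scalar ODE
\[
c(c-d)\phi'' + (c^2 - d)(\phi')^2 = 0
\]
for $\phi = \log\psi_1$. Setting $u = \phi'$, this is separable with solution $u(s) = -1/(\kappa s + 1/\E[Y_1])$ where $\kappa = (d-c^2)/[c(c-d)]$, giving $\psi_1(s) = (1 + \kappa\E[Y_1] s)^{-1/\kappa}$; this is the Laplace transform of $Y_{1/\kappa,\,1/(\kappa\E[Y_1])}$, a gamma density. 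The earlier relation $\psi_2 = \psi_1^{(1-c)/c}$ then identifies $Y_2$ as gamma of the same rate $\lambda = 1/(\kappa\E[Y_1])$ and shape $(1-c)/(c\kappa)$.

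The main obstacle will be ensuring that the ODE solution is a \emph{bona fide} probabilistic Laplace transform with positive shape parameter, which amounts to checking the signs of $c-d$ and $d-c^2$. Since $B \in [0,1]$ gives $B^2 \le B$, one has $c^2 \le d \le c$ with both inequalities strict when $B$ is nondegenerate (by Jensen and by $B^2 = B$ iff $B \in \{0,1\}$ a.s.), and equality in either bound forces $B$ constant\textemdash ruling out a nontrivial independent decomposition. These inequalities ensure $\kappa > 0$ above, so that $1/\kappa$ is a valid gamma shape. Differentiation under the expectation to obtain $\E[Y_1^n e^{-sA}] = (-1)^n \partial_s^n \psi_1(s)\cdot\psi_2(s)$ is routine on $(0,\infty)$ because Laplace transforms of strictly positive variables are real-analytic there.
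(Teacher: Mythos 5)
Your forward direction (gamma $\Rightarrow$ independence of $A$ and $B$) is exactly the paper's proof: the same change of variables $(a,b)\mapsto(ab,a-ab)$, the same Jacobian $a$, and the same factorization into a $(k_1+k_2)$-gamma density in $a$ and a beta density in $b$. Where you genuinely diverge is the converse: the paper does \emph{not} prove it, stating explicitly that the uniqueness direction is deferred to Findeisen's argument \cite{FindeisenGamma}, which uses all moments $\E[B^n]$ together with moment-determinacy of the gamma law. Your Laplace-transform route is a legitimate and essentially self-contained alternative: computing $\E[Y_1 e^{-sA}]$ and $\E[Y_1^2 e^{-sA}]$ both via independence of $Y_1,Y_2$ and via $Y_1=AB$ with $A\perp B$ does yield $\psi_2=\psi_1^{(1-c)/c}$ and then the separable ODE $c(c-d)\phi''+(c^2-d)(\phi')^2=0$ (I checked the coefficient algebra), whose solution with $\phi(0)=0$ is the gamma log-Laplace transform; since the Laplace transform determines the law of a nonnegative variable, this closes the converse using only the first two moments of $B$ and no moment-determinacy. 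Your sign analysis $c^2<d<c$ for nondegenerate $B\in(0,1)$ is the right way to guarantee $\kappa>0$, and the case $C=0$ in $1/u=C-\kappa s$ (which would correspond to $\E[Y_1]=\infty$) is excluded because it forces $\psi_1(s)\to\infty$ as $s\to0^+$, contradicting $\psi_1\le1$ — worth one sentence if you write this out. The only caveat, which you correctly flag, is that the argument needs $Y_1,Y_2$ positive and nondegenerate; the paper's statement omits these hypotheses, but they are part of Lukacs' theorem and without them the ``only if'' direction is false, so your added assumption is a fix rather than a restriction. Net effect: you reproduce the paper's half verbatim and supply a proof of the half the paper only cites, at the cost of some routine regularity bookkeeping (differentiation under the expectation, positivity of $\psi_i$ to take logarithms) that you correctly identify as standard for Laplace transforms on $(0,\infty)$.
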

\begin{proof}[Proof ($\implies$)]
Consider  the map
\begin{align}
    g(Y_1, Y_2) &= \left(Y_1+Y_2, \frac{Y_1}{Y_1+Y_2}\right) =: (A, B).
\end{align}
Then for $a\ne 0$,
\begin{align}
g^{-1}(a,b) &= (ab, a-ab).
\end{align}
Therefore
\begin{align}
    \left|\det\frac{\partial g^{-1}}{\partial (a,b)}\right| &= \left|\det\begin{bmatrix}b & 1-b\\ a & -a\end{bmatrix}\right| = a.
\end{align}
By \eqref{eq:rv_transform}, the pushforward density is
\begin{align}
    \label{eq:lukacs_joint_density}
    f_{g(Y_1,Y_2)}(a, b) &= af_{Y_1}(ab)f_{Y_2}(a-ab) \\
    &= \frac{a(ab)^{k_1-1}(a-ab)^{k_2-1}}{\Gamma(k_1)\Gamma(k_2)}\lambda^{k_1+k_2}e^{-\lambda a}\\
    &= \frac{b^{k_1-1}(1-b)^{k_2-1}}{\Gamma(k_1)\Gamma(k_2)}a^{k_1+k_2-1}\lambda^{k_1+k_2}e^{-\lambda a}.
\end{align}
The total $A$ and fraction $B$ are therefore independent.
Substituting the Beta function \eqref{eq:beta_fn},
\begin{align}
    \eqref{eq:lukacs_joint_density}&= \frac{b^{k_1-1}(1-b)^{k_2-1}}{\mathrm{B}(k_1, k_2)}\frac{a^{k_1+k_2-1}\lambda^{k_1+k_2}e^{-\lambda a}}{\Gamma(k_1+k_2)} =: f_B(b)f_A(a).
\end{align}
Then $f_A(a)$ is a gamma distribution (as expected, \ref{lem:gam_sum}) and $f_B(b)$ is a beta distribution. Therefore, 
\begin{align}
    \label{eq:beta_gamma}
    B = \frac{Y_{1;k_1,\lambda}}{Y_{1;k_1,\lambda} + Y_{2;k_2,\lambda}} &= Z_{k_1,k_2}
\end{align}
is beta-distributed for $k_1, k_2 \in \R_{\ge 0}$.
\end{proof}
Note that \eqref{eq:beta_gamma} is independent of the rate $\lambda$ of $Y_1, Y_2$. The converse $(\impliedby)$, that gamma distributions are unique in possessing this independence property, is not proven here, but we refer the reader to a proof \cite{FindeisenGamma} using the fact that the gamma distribution is uniquely determined by its moments (3.3.25, \cite{ProbabilityTheory}).

\begin{corollary}[Beta-thinned gamma random variable]
    If $Y_{1;k_1+k_2,\lambda}$ and $Z_{1;k_1,k_2}$ are independent gamma and beta random variables, respectively, then
\begin{align}
    Y_{2;k_1,\lambda} &= Z_{1;k_1,k_2}Y_{1;k_1+k_2,\lambda}
\end{align}
is an independent gamma random variable of the same rate $\lambda$ and lower shape parameter $k_1$.
\end{corollary}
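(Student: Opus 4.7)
The plan is to invert the pushforward calculation used in Theorem~\ref{thm:gamma_independence}. Let $A := Y_{1;k_1+k_2,\lambda}$ and $B := Z_{1;k_1,k_2}$ be the given independent pair on $(0,\infty)\times(0,1)$, and define the map
\begin{align*}
    h(a,b) &= \bigl(ab,\ a(1-b)\bigr) =: (\tilde Y_1, \tilde Y_2),
\end{align*}
which is a diffeomorphism onto $(0,\infty)^2$ with inverse $h^{-1}(y_1,y_2) = (y_1+y_2,\ y_1/(y_1+y_2))$. This is precisely the inverse of the map $g$ used to establish \eqref{eq:lukacs_joint_density}, so the entire calculation in that proof can be replayed backwards.

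First I would compute $|\det \partial h/\partial(a,b)| = a$ and apply the pushforward rule \eqref{eq:rv_transform} to obtain
\begin{align*}
    f_{\tilde Y_1,\tilde Y_2}(y_1,y_2) &= \frac{1}{a}\,f_A(a)\,f_B(b)\Big|_{a=y_1+y_2,\ b=y_1/(y_1+y_2)}.
\end{align*}
Next I would substitute the explicit gamma density \eqref{eq:gamma_density} for $A$ and the beta density for $B$, noting that the $\Gamma(k_1+k_2)$ in the denominator of $f_A$ cancels against $\mathrm{B}(k_1,k_2)^{-1} = \Gamma(k_1+k_2)/[\Gamma(k_1)\Gamma(k_2)]$ in $f_B$ via \eqref{eq:beta_fn}. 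The monomial factor $a^{k_1+k_2-1}b^{k_1-1}(1-b)^{k_2-1}/a$ regroups as $y_1^{k_1-1}y_2^{k_2-1}$ and the exponential splits as $e^{-\lambda a} = e^{-\lambda y_1}e^{-\lambda y_2}$, yielding
\begin{align*}
    f_{\tilde Y_1,\tilde Y_2}(y_1,y_2) &= \frac{\lambda^{k_1}y_1^{k_1-1}e^{-\lambda y_1}}{\Gamma(k_1)}\cdot\frac{\lambda^{k_2}y_2^{k_2-1}e^{-\lambda y_2}}{\Gamma(k_2)}.
\end{align*}

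Since the joint density factorises into two univariate gamma densities, I conclude that $\tilde Y_1 = BA = Z_{1;k_1,k_2}Y_{1;k_1+k_2,\lambda}$ has distribution $Y_{2;k_1,\lambda}$ and is moreover independent of the complementary variable $\tilde Y_2 = (1-B)A$, itself an auxiliary $k_2$-gamma of rate $\lambda$. The main obstacle is essentially clerical rather than conceptual: once $h$ is recognised as $g^{-1}$ from Theorem~\ref{thm:gamma_independence}, every step of that computation replays in reverse, and the only genuine care required is in verifying that $h$ is a bijection between the declared open supports and in tracking the normalising constants through the beta-gamma identity \eqref{eq:beta_fn}.
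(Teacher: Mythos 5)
Your proof is correct and is essentially the paper's intended argument: the corollary is stated as an immediate consequence of Theorem~\ref{thm:gamma_independence}, whose proof already establishes that $g(Y_1,Y_2)=(Y_1+Y_2,\,Y_1/(Y_1+Y_2))$ carries a pair of independent $\lambda$-rate gammas to an independent gamma--beta pair, so applying $g^{-1}=h$ to recover $BA\overset{\text{d}}{=}Y_{1;k_1,\lambda}$ is exactly the reverse of that computation. Your explicit Jacobian and density bookkeeping check out (including the cancellation of $\Gamma(k_1+k_2)$ against $\mathrm{B}(k_1,k_2)^{-1}$), so you have simply made explicit what the paper leaves implicit.
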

The {\it thinning} here refers to the fact that $Z_1$ is supported on $[0, 1]$, hence decreasing the number of events occurring in a section of a 1D Poisson process.

\begin{lemma}[Differential entropy of a fixed-mean gamma random variable is strictly decreasing in $k \in (1, \infty)$]
\end{lemma}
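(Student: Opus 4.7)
The plan is to reduce the question to a single-variable calculus problem in $k$ after imposing the mean constraint, and then to a standard inequality on the trigamma function.

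First I would compute the differential entropy of the gamma density \eqref{eq:gamma_density} in closed form. The two needed moments are $\E[Y] = k/\lambda$ and $\E[\log Y] = \psi(k) - \log\lambda$, where $\psi$ is the digamma function; the log-moment comes from differentiating $\int_0^\infty \lambda^k y^{k-1} e^{-\lambda y} dy = \Gamma(k)$ with respect to $k$ and dividing by $\Gamma(k)$. Substituting into $h(Y) = -\E[\log f_Y]$ and collecting terms gives
\[
    h(Y_{k,\lambda}) = -\log\lambda + (1-k)\psi(k) + k + \log\Gamma(k).
\]
Imposing the mean constraint $\mu = k/\lambda$, i.e.\ $\lambda = k/\mu$, expresses $h$ as a function of $k$ alone:
\[
    h(k) = \log\mu + k - \log k + (1-k)\psi(k) + \log\Gamma(k).
\]

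Next I would differentiate in $k$. Using $\frac{d}{dk}\log\Gamma(k) = \psi(k)$, the $\psi(k)$ coming from $\frac{d}{dk}[(1-k)\psi(k)] = -\psi(k) + (1-k)\psi'(k)$ cancels exactly against the $\psi(k)$ from $\log\Gamma(k)$, and the expression collapses to
\[
    \frac{dh}{dk} = 1 - \frac{1}{k} + (1-k)\psi'(k) = (k-1)\left(\frac{1}{k} - \psi'(k)\right).
\]
For $k > 1$ the prefactor $(k-1)$ is strictly positive, so strict negativity of $dh/dk$ reduces to the inequality $\psi'(k) > 1/k$.

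This trigamma bound is the only nontrivial step, and it follows from a one-line integral comparison. Using the series representation $\psi'(k) = \sum_{n=0}^\infty (n+k)^{-2}$ and the fact that $x \mapsto (x+k)^{-2}$ is strictly decreasing and positive on $[0,\infty)$, the left-endpoint Riemann bound gives
\[
    \psi'(k) = \sum_{n=0}^\infty \frac{1}{(n+k)^2} > \int_0^\infty \frac{dx}{(x+k)^2} = \frac{1}{k}
\]
for every $k > 0$. Hence $dh/dk < 0$ strictly on $(1,\infty)$, proving the claim. The only real obstacle is the bookkeeping of the $\psi$ cancellation in the differentiation step; once this is verified, the trigamma inequality and the sign conclusion are immediate, and no appeal to asymptotics or to the concavity of $h$ in $\lambda$ is needed.
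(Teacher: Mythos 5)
Your proof is correct and follows essentially the same route as the paper: the same closed-form entropy $k+\log\Gamma(k)+(1-k)\psi(k)-\log\lambda$, the same differentiation under the constraint $\lambda=k/\mu$, and the same reduction to the trigamma inequality $\psi'(k)>1/k$. The only difference is that you prove that inequality by an integral comparison of the series $\sum_{n\ge 0}(n+k)^{-2}$, whereas the paper simply cites it; this is a welcome self-contained addition but not a change of approach.
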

\begin{proof}
Let $Y_{k, \lambda}$ be a gamma random variable of shape parameter $k$ and rate $\lambda$. Its differential entropy is
\begin{align}
    H(Y) &= k + \log \Gamma(k) + (1-k)\psi(k) - \log \lambda, 
\end{align}
where $\psi(k) = \frac{d}{dk}\log\Gamma(k)$ is the digamma function.
Hence, at fixed mean $\mu = k / \lambda$,  using the hypothesis $k > 1$ and the trigamma inequality $\psi'(k) > 1/k$ \cite{ProbabilityTheory}, we have the $k$-monotonicity
\begin{align}
    \frac{\partial H}{\partial k} &= 1 + \psi'(k) (1-k) - \frac{1}{k} < 1 + \frac{1}{k}(1-k) - \frac{1}{k} = 0.
\end{align}
\end{proof}
Note furthermore that $H(Y) \ge 0$ for $k \le 1$, so one recovers the maximum-entropy property of the exponential random variable \ref{lem:exp_maxent}.

\textbf{A note on maximum likelihood estimation.}
Throughout, we use standard methods for maximum-likelihood estimation of the shape and scale parameters $(k, \lambda)$ of the gamma distribution at fixed offset $0$, as implemented in Python's \textit{scipy.stats.gamma.fit} function.
As the maximum likelihood estimate of $\lambda$ is given by the $k / \overline{x}$ where $k$ is given and $\overline{x}$ is the empirical mean, we estimate 95\% confidence intervals for the shape parameter $k$ at fixed scale $\lambda$ using $N=1000$ parametric bootstraps as follows. 
Each sample of size $n$, where $n$ is the size of the original dataset to be fit, is produced from a gamma distribution with $(\hat{k}, \hat{\lambda})$, the estimated parameters from the original empirical distribution.
Then, at fixed $\hat{\lambda}$, a new $\Tilde{k}$ is maximum-likelihood estimated from these samples. 
From this set $\{\Tilde{k}_i\}_{i=1}^N$, the values corresponding to the $(2.5, 97.5)$ percentiles are reported as the 95\% CIs.

\section{1-dimensional models of cells within an ECM}

Recall that the configuration of a sequence of cellular positions $\{R_i\}$ in a one-dimensional ECM (e.g. $[0, \infty)$ or the circle $\SP^1$) is uniquely defined, up to relabeling, by the intercellular spacings $L_i = R_{i+1} - R_i$. 
In the following models, we let $L_i$ be given (up to appropriate scaling) by the steady-state protein concentration $P_i^*$ referred to in the main text.
Realizing random 1D configurations in different ways by applying constraints analogous to the ensembles of statistical physics, we obtain the same $k=2b$-gamma distribution governing the Voronoi lengths in a regime analogous to the thermodynamic limit.
These models are named by the ECM dimension $d$ (I or II) and described by the particular ensemble considered.
as summarized in Table \ref{tab:pp_models}.

\begin{table}[h]
\caption{Models in 1D and 2D}
\centering
\label{tab:pp_models}
\begin{tabular}{l l}
\toprule
Model & Description \\ 
\midrule
\hyperlink{I.0}{I.0} & on the half-line  \\
\hyperlink{I.1}{I.1} & circular Gibbs \\
\hyperlink{I.2}{I.2} & circular canonical \\
\hyperlink{I.3}{I.3} & circular grand canonical \\
\hyperlink{I.4}{I.4} & circular canonical, finite size \\
\hyperlink{I.D.1}{I.D.1} & circular canonical, Brownian motion \\ 
\hyperlink{I.D.2}{I.D.2} & circular canonical, noncolliding Brownian motion\\ 
\hyperlink{I.D.3}{I.D.3} & noncolliding Brownian motion on a growing circle\\ 
\hyperlink{I.D.4}{I.D.4} & maximum-entropy growth rates\\
\hyperlink{II.1}{II.1} & canonical on periodic unit square\\
\bottomrule
\end{tabular}
\end{table}

{\it Notation.} We denote random variables by capital letters $X_{i;\alpha,\beta,\ldots}$ accompanied by indices $i$ and parameters $\alpha,\beta,\cdots$.
Variables $i \ne j$ are independent unless otherwise noted. 
The letters $W_{i;\mu,\sigma^2}, X_{i;\lambda}$, $Y_{i;k,\lambda}, Z_{i;\alpha,\beta},$ and $N_{i;\lambda}$, and are reserved for Gaussian, exponential, gamma, beta, and Poisson random variables respectively. 
$X \pdfsim f_X(x)$ indicates that $X$ has the probability density function $f_X(x)$, with $X \cdfsim F_X(x)$ indicating the same for the cumulative density. 
$X_i \dconv Y$ indicates that $X_i$ converges to $Y$ in distribution.
For example, we would say that $Y_{k,\frac{k}{\overline{v}-v_c}} + v_c \pdfsim p(v)$, where $p(v)$ is the distribution (1).

As in the Gibbs, microcanonical, and grand canonical ensembles, we consider three types of random configurations of cells in the ECM: (i) fixed cell counts $N=n$, with random intercell spacings and circumferences; (ii) fixed circumferences $C=c$ and counts $N=n$, with random positions $\{R_i\}_{i=1}^N$; and (iii) fixed circumferences $C=c$, with random cell counts $N$ and positions. 
Like the convergence of the ensembles in the thermodynamic limit, we show that the same gamma distribution arises in the large-$n,c$ limits of these cases.

\hypertarget{I.0}{}
\textit{Model I.0 (on the half-line)} 
\textemdash Consider a semi-infinite {\it Volvox} modelled as a sequence of cellular positions $\{R_i\}_{i=1}^\infty$ on the half-line $[0, \infty)$ as in Fig. 1(c).
In the special case of $b=1$ protein bursts, $R_i$ are the cumulative sums of $i$ i.i.d. exponential random variables:
\begin{align}
    \label{eq:1d_poisson}
    R_i &= \sum_{j=1}^i X_{j,i;\lambda},
\end{align}
and therefore occur on the half-line $[0, \infty)$ as a Poisson process \cite{DaleyPointProcess}.
This is the configuration of the ideal gas\textemdash or the maximum-entropy configuration\textemdash in which the number of cells observed in any interval of length $\ell$ is a Poisson random variable $N_{\lambda \ell}$ whose positions are i.i.d. uniform random variables \cite{DaleyPointProcess}. 
This is the case of class (i) configurations.

The Voronoi lengths $V_i = (L_i + L_{i+1})/2 = (X_{i;\lambda} + X_{i+1;\lambda})/2$ are therefore gamma-distributed with $k=2$,
\begin{equation}
    V_i = \frac{X_{i;\lambda}+X_{i+1;\lambda}}{2} =  \frac12Y_{i;2,\lambda} \pdfsim 4\lambda^2ve^{-2\lambda v}.
    \label{eq:1d_reals_gamma}
\end{equation}
This result already shows explicitly the deep link between the Voronoi construction and gamma distributions. 
More generally, for burst count $b$, the resulting Voronoi lengths, by the same argument, are $k=2b$-gamma random variables.

In the opposite limit, holding the mean spacing $\E[L_i] = b  / \lambda$ fixed while taking the burst count $b \to \infty$, the variance $\sigma^2 = b/\lambda^2 = \E[L_i]^2 / b$ vanishes while the central limit theorem ensures the convergence of the shifted and rescaled lengths $\sqrt{b}\left(\frac{L_i}{\E[L_i]} - 1\right) \to W_{i;0,1}$ to a Gaussian.
This is the perfectly spaced lattice of cellular positions occurring as the natural numbers $\N$ on $[0, \infty)$\textemdash the ``crystalline,'' or class (i) configuration.

\hypertarget{I.1}{}
\textit{Model I.1 (circular Gibbs)} \textemdash 
Consider a circular {\it Volvox}, constructed by selecting a fixed number $N_{;\lambda}=n+1$ of successive points $\{R_i\}_{i=0}^{n+1}$ from the half-line in \hyperlink{I.0}{I.0} and identifying the first and last points, as in Fig. 1(c). 
This circle has a random circumference $C$ (``Gibbs ensemble'') which is $nb$-gamma distributed, with resulting Voronoi lengths $V_i = Y_{i;2b,2\lambda}$ governed by $k=2b$-gamma distributions by an identical argument as in \hyperlink{I.0}{I.0}.
Fig. 1(d) shows an empirical distribution of $V_i$ from $10^4$ samples.

\hypertarget{I.2}{}
\textit{Model I.2 (circular canonical)}\textemdash 
Consider a fixed-$N$, fixed-$C$ configuration as follows.
Let us take the intercellular spacings $L_i$ to be $b$-gamma random variables $Y_{i;b,\lambda}$ as in \hyperlink{I.1}{I.1}, \textit{conditional} on their sum $\sum_i Y_i = C$. 
That is, if $A = Y_{1;b,\lambda}$ is one spacing and $B=\sum_{j=2}^nY_{j;b,\lambda} = Y_{B;(n-1)b,\lambda}$ is the rest, then the distribution of $A$ given $A+B=C$ is
\begin{align}
    \label{eq:conditional_sum_density}
    f_{A|C}(a|c) &= \frac{a^{b-1}(c-a)^{(n-1)b-1}}{\Beta(b,(n-1)b)c^{nb-1}}
\end{align}
which one shows by the fraction-sum independence property of gamma random variables \ref{thm:gamma_independence}, where $\Beta(\alpha,\beta)$ is the Beta function.
Then, by inspection of \eqref{eq:conditional_sum_density} and scaling laws for r.v.s, $L_i$ is the beta random variable $A|C = cZ_{i;b,b(n-1)}$ with $C=c$ now the fixed circumference.

Once more, in the special case of protein burst count $b=1$, $L_i = cZ_{i;1,n-1}$, which is simply the first order statistic of $(n+1)$ i.i.d. uniform random variables on the interval $[0, c]$ (with $n+1$ arising from identifying the ends of the interval to form a circle).
Then, this is the distribution of \textit{uniform spacings} of the interval $[0, c]$, which is precisely the distribution of waiting times for $n$ events in a Poisson process conditional on a total wait time $C=c$ (see \S 4.1, \cite{PykeSpacings}), as shown in Fig. 1(c). 

Since $Z_{i;b,b(n-1)}$ can be expressed as the fraction $Y_{i;b,\lambda}/\sum_{j=1}^nY_{j;b,\lambda}$ \ref{thm:gamma_independence}, this allows us to compute the corresponding Voronoi lengths $V_i$ as
\begin{align}
    \label{eq:1d_circle_beta}
    V_i &= \frac{c}{2}\frac{Y_i+Y_{i+1}}{\sum_{j=1}^nY_{j}} = \frac{c}{2}Z_{j;2b,(n-2)b},
\end{align}
with subscript $Z_j$ indicating that $Z_j$ is a distinct (though not independent) random variable from the earlier $Z_i$.
In the Poisson case $b=1$, this is now simply one-half the \textit{second} order statistic of i.i.d. uniform r.v.s. on  $[0, c]$.

Taking the large-cell count and ECM circumference limits $n, c\to\infty$ limit at fixed cell number density $\rho=m/c=(n-2)/c$ (analogous to the thermodynamic limit of statistical mechanics) and per-cell burst count $b$, one obtains \ref{lem:beta_gamma} the convergence in distribution of the Voronoi segments,
\begin{eqnarray}
    \label{eq:circle_gamma_convergence}
    V_i &=& \frac{mb}{2\rho}Z_{;2b,mb}  \overset{m\to\infty}{\longrightarrow} \frac12 Y_{;2b,\rho},
\end{eqnarray}
whose limit is once again the $k=2b$ gamma random variable (with rate $\rho$) in \hyperlink{I.0}{I.0} and the Gibbs ensemble \hyperlink{I.1}{I.1}. 

\hypertarget{I.3}{}
\textit{Model I.3 (circular grand canonical)}\textemdash 
Let the circumference $C=c$ now be fixed with the count $N$ a random variable. 
In the case of $b=1$ bursts, $N = N_{;c\lambda}$ is Poisson-distributed.
Using a similar argument to \hyperlink{I.2}{I.2}, the Voronoi length distribution $p_{V_i}$ can be determined by marginalizing the joint distribution $p_{Z_{;2,n-1},N_{;\lambda}}$ over $n$, giving the {\it compound beta-Poisson distribution} which is once again a $k=2$ gamma distribution, shown as follows.

As in \hyperlink{I.2}{I.2}, let us consider the $k$th order statistic of $N$ uniform random variables (representing the fixed-circumference constraint), with $N_\lambda$ now a Poisson-distributed random variable conditioned to be minimum $k$. The marginal distribution of the order statistic given $\lambda$ is (a particular) {\it compound beta-Poisson distribution}, given by
\begin{eqnarray}
    \label{eq:compound_order_statistic}
    \Prb(Z_{k, N-k+1}=x|\lambda, N\ge k) &=& \sum_{n=k}^\infty \Prb(Z_{k,n-k+1}=x|N=n)\Prb(N=n|\lambda,N \ge k).
\end{eqnarray}
To see the parametrization more clearly, let $m = n - k + 1$; then, 
\begin{eqnarray}
    \eqref{eq:compound_order_statistic} &=&\frac{1}{1-\Prb(N < k)} \sum_{m=1}^\infty\frac{x^{k-1}(1-x)^{m-1}}{\Beta(k,m)}\frac{\lambda^{m-1+k}e^{-\lambda}}{\Gamma(m+k)}\\
    &=& \frac{1}{1-\Prb(N < k)}\frac{\lambda^k x^{k-1}e^{-\lambda}}{\Gamma(k)}\sum_{m=1}^\infty \frac{(\lambda(1-x))^{m-1}}{\Gamma(m)}\\
    &=& \frac{1}{1-\Prb(N < k)}\frac{\lambda^k x^{k-1}e^{-\lambda x}}{\Gamma(k)}\\
    &=:& \frac{1}{Z} \frac{\lambda^k x^{k-1}e^{-\lambda x}}{\Gamma(k)},\qquad x \in [0, 1].
\end{eqnarray}
Recognizing the second factor as the gamma distribution, the first factor $Z$ is simply a normalizing constant restricting the support to $[0, 1]$. 
Here we see the emergence of gamma distributions from the order statistics of a Poisson-distributed number of uniform random variables. 
This may be viewed as a roundabout method of constructing the 1D Poisson process. 
As we take the support of this distribution (the circumference $C$) to infinity, we have $P(N_{C\lambda} < k) \to 0$ for any fixed $k$, hence $Z \to 1$ and we recover the true gamma distribution.

For the general $b$ case, we note that this construction is equivalent to placing an observation window $[x, x+c]$ at random on the half-line process \hyperlink{I.0}{I.0}, in which case the spacings $L_i$ follow a truncated $b$-gamma distribution:
\begin{eqnarray}
    \label{eq:truncated_exponential}
    L_i &\pdfsim& \frac{\lambda^b x^{b-1}e^{-\lambda x}}{\gamma(b, \lambda c)},\qquad x \in [0, c]
\end{eqnarray}
with $\gamma$ the lower incomplete gamma function.
Taking $c\to\infty$ with rate $\lambda$ fixed (thermodynamic limit), we have $\gamma(b, \lambda c) \to \Gamma(b)$, hence $L_i \dconv Y_{i;b,\lambda}$ and $V_j \dconv Y_{j;2b,2\lambda}$, giving the same $k=2b$-gamma distribution for Voronoi lengths. 






\begin{figure*}[t]
\includegraphics[width=\columnwidth]{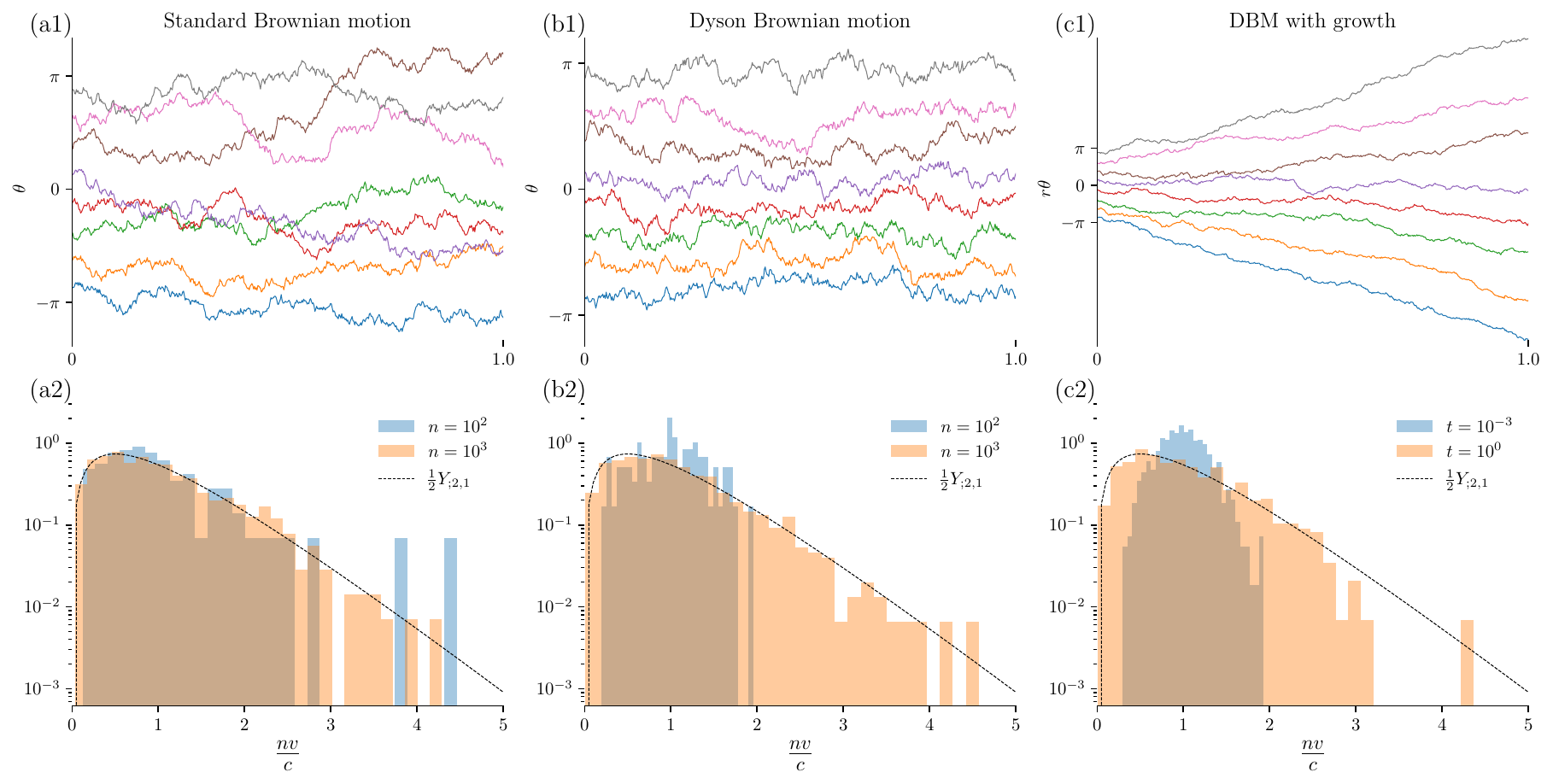}
\caption{Dynamic models of cellular positions during growth converge to gamma distributions in various limits. (a1,a2) Standard Brownian motion in angular coordinates converges to beta-distributed segments in large $t$, and gamma-distributed segments in large $n$. (b1,b2) Dyson Brownian motion in angular coordinates converges to gamma-distributed segments at large $t$ and $n$. Non-conformance to gamma distributions is observed at low $n$ due to pair-repulsion. (c1,c2) Dyson Brownian motion on a growing domain satisfying particular growth constraints converges in large-$t$ and $n$ to gamma-distributed segments.}
\label{fig:brownian}
\end{figure*}


\hypertarget{I.4}{}
\textit{Model I.4 (circular canonical, finite size)}\textemdash 
This refers to the fixed-cell size model discussed and analyzed in the main text.

\hypertarget{I.D.1}{}
\textit{Model I.D.1 (Brownian motion on the circle)}\textemdash Let $\{R_t^{(i)}\}_{i=1}^n$ be $n$ Brownian motions (BMs) on a circle of radius $r$ given by 
\begin{eqnarray}
    \label{eq:circular_motion}
    R_t^{(j)} &=& r\exp(i\theta_t^{(j)}), 
\end{eqnarray}
with $\theta_t^{(j)}$ standard BMs on $\mathbb{R}$ and initial conditions for the probability density $p^{(j)}(0) = \delta_{R_0^{(j)}}$. 
The time-dependent probability density $p^{(i)}(t)$ tends exponentially in $L^2$ to the uniform distribution, hence $R_t^{(i)}$ converge in distribution to i.i.d. uniform r.v.s. and the configuration approaches that of Model \hyperlink{I.2}{I.2}.
$V_i$ are therefore beta-distributed \eqref{eq:1d_circle_beta} in large $t$ and gamma-distributed in large $t,n$ \eqref{eq:circle_gamma_convergence}\textemdash see Fig. \ref{fig:brownian}(a1,a2). 
Yet, the sample paths of \eqref{eq:circular_motion} almost surely intersect, unphysically reordering the cells.

\hypertarget{I.D.2}{}
\textit{Model I.D.2 (noncolliding Brownian motion)}\textemdash Let $\{R_t^{(i)}\}_{i=1}^n$ be samples of the conditional distributions of $n$ circular Brownian motions whose angles are 
in ascending order modulo $2\pi$, thereby lying in the set
\begin{eqnarray}
    \label{eq:weyl_chamber}
    D_n &=& \{x\in\mathbb{R}^n\ |\ x_1 < \cdots < x_n < x_1 + 2\pi\},
\end{eqnarray}
a construction known as {\it noncolliding Brownian motion} 
or Brownian motion within the {\it Weyl chamber} $D_n$ 
\cite{GrabinerBrownian}. In \cite{HobsonNoncolliding} 
eq. 4.1, it is shown that the conditional fluctuations 
are Gaussian plus a singular $r^{-1}$ pair-repulsion,
\begin{eqnarray}
    \label{eq:dyson_bm}
    d\theta_t^{(i)} &=& \sigma dB_t^{(i)} + \frac{\sigma^2}{2}\sum_{j\ne i} \cot\left(\frac{\theta_t^{(i)} - \theta_t^{(j)}}{2}\right)dt.
\end{eqnarray}
A physical interpretation of \eqref{eq:dyson_bm} is 
that of a gas confined to the unit circle with the 
pair-potential 
\begin{eqnarray}
\label{eq:dyson_potential}
    U &=& -\sum_{j < k}\log\left|\exp(i\theta_k) - \exp(i\theta_j)\right|,
\end{eqnarray}
constituting a simple model of non-colliding cell motion. 
Eq. \ref{eq:dyson_bm} is precisely the eigenvalue dynamics $\lambda_t^{(j)} = \exp(i\theta_t^{(j)})$ of a Brownian 
motion $U_t$ on the unitary group $\mathbb{U}(n)$, known as \textit{Dyson Brownian motion} \cite{DysonBrownian}.

Being confined to $D_n$ \eqref{eq:weyl_chamber}, 
the positions $R_t^{(i)}$ do not converge to uniform 
r.v.s on the circle as in I.5\textemdash compare Figs.
\ref{fig:brownian}(a1) and \ref{fig:brownian}(b1). The
stationary distribution of \eqref{eq:dyson_bm} is the
circular unitary ensemble (CUE) \cite{DysonBrownian},
\begin{equation}
    \label{eq:circular_ensemble}
    \rho_*(\theta_1,\cdots,\theta_n) = \frac{1}{Z_{n}}\prod_{j<k}|\exp(i\theta_j) - \exp(i\theta_k)|^\beta,
\end{equation}
where $\beta = 2$ is the inverse temperature.
We use this result to derive the Voronoi length distribution in the large-$t,n$ limits. 
Let $\{\theta^{(i)}\}_{i=1}^n$ be a sample of the stationary distribution \eqref{eq:circular_ensemble}, the spectrum of a uniform sample of  $\mathbb{U}(n)$; the empirical distribution $\mu_\theta(n) = n^{-1}\sum_{i=1}^n\delta_{\theta^{(i)}}$ converges almost surely in large $n$ (Theorem 3, \cite{DiaconisEigenvalues}) to the uniform distribution on the unit circle. 
Thus, the spacings $\theta^{(i+1)}_t-\theta^{(i)}_t$ converge in large-$t$ and $n$ to the spacings of the order statistics of $n$ uniform random variables as in \hyperlink{IS2}{IS.2}. 
Taking $n\to\infty$ with density $\rho=(n-2)/c$ fixed, $V_i$ converge to $k=2$ gamma random variables as in \eqref{eq:circle_gamma_convergence}.

Convergence to gamma laws depending on particle count $n$ is shown for BM and DBM in Figs. \ref{fig:brownian}(a2,b2); in contrast to BM, DBM spacings lose the long tail at low $n$ due to repulsion \eqref{eq:circular_ensemble}. 

\hypertarget{I.D.3}{}
\textit{Model I.D.3 (noncolliding Brownian motion, growth)}\textemdash 
To account for growth of the ECM, 
let the radius $r(t)$ in \hyperlink{A.I.1}{A.I.1} be time-dependent, 
scaling the configuration as $R_t^{(i)} = r(t)\exp(i\theta_t^{(i)})$ 
with $\theta_t^{(i)}$ given by \eqref{eq:dyson_bm} as shown in Fig. 
\ref{fig:brownian}(c1). Applying It\^o's lemma, $R_t^{(i)}$ satisfies 
the stochastic differential equation
\begin{equation}
    dR_t^{(i)} = \dot{r}\exp(i\theta_t^{(i)})dt + iR_t^{(i)}d\theta_t^{(i)} - \frac{r}{2}\theta_t^{(i)}dt.
    \label{eq:growth_sde}
\end{equation}
Substituting $d\theta_t^{(i)}$ from \eqref{eq:dyson_bm}, the diffusion constant for \eqref{eq:growth_sde} is
$D=r^2\sigma^2/2$.
Requiring that the lateral diffusion is time invariant implies $\dot{D} = 0$, and thus that 
the standard deviation of the radius-normalized dynamics \eqref{eq:dyson_bm} should decay as $\sigma(t) \propto r(t)^{-1}$.
Exponential convergence of \eqref{eq:dyson_bm} to the stationary solution \eqref{eq:circular_ensemble} is ensured by a growth condition on $r(t)$. 
By standard Fourier arguments (e.g. \S 2.2, \cite{KempHeat}), solutions to the time-dependent diffusion equation for the probability density of $p(t)$ of the unitary Brownian motion $U_t$ satisfy, for some Poincar\'e-like constant $C_n > 0$ depending only on $n$,
\begin{equation}
    d(p(t), \nu) \le d(p(0), \nu) \exp\left(-C_n\int_0^t\sigma(s)^2ds\right),
\label{eq:unitary_bm_convergence}
\end{equation}
with $d$ the $L^2$ metric and $\nu$ the uniform (Haar) measure on the unitary group $\mathbb{U}(n)$. Therefore, if $r(t)$ satisfies:
\begin{eqnarray}
    \label{eq:growth_condition}
    \lim_{t\to\infty}\int_0^t \frac{1}{r(s)^2}ds &=& \infty, \qquad r(0) > 0
\end{eqnarray}
the RHS in \eqref{eq:unitary_bm_convergence} vanishes in large $t$, and global exponential stability is assured. Then, as in \hyperlink{ID2}{ID.2}, the positions $R_t^{(i)}$ approach a random uniform spacing of a circle of radius $r(t)$ in large $n$, resulting in gamma-distributed Voronoi segments $V_i$. Rapid convergence in $t$ to a gamma law (with $D = 0.1, n=1000, \dot{r}=1$) is shown in Fig. \ref{fig:brownian}(c2). Formally, \eqref{eq:growth_condition} is not satisfied by such a linear growth law, yet we observe empirical convergence to a gamma law on the order of a unit of (scaled) time.

Condition \eqref{eq:growth_condition} can  be understood by a P\'eclet number relating drift and diffusion timescales in growth. Considering the radial drift velocity $v_r$ in \eqref{eq:growth_sde}, let
\begin{equation}
    \label{eq:growth_peclet}
    \text{Pe} = \frac{\tau_{d}}{\tau_{v_r}} = \frac{2L\dot{r}}{r^2\sigma^2} \propto L\dot{r}
\end{equation}
with $L$ a test length section.
In the limit $\text{Pe} \to 0$,  condition \eqref{eq:growth_condition} is  satisfied; when $\text{Pe} \to \infty$, the exponential multiplier in the bound \eqref{eq:unitary_bm_convergence} approaches $1$, ``freezing'' the angles $\theta^{(i)}_t$ to initial conditions. Finally, conditions $\dot{D} = 0$, $\dot{\text{Pe}} = 0$, and \eqref{eq:growth_condition} cannot simultaneously be satisfied. 

\hypertarget{I.D.4}{}
\textit{Model I.D.4 (maximum-entropy growth rates)}\textemdash Let the segments grow linearly in time as $\dot{L}_i = G_{i;\mu}$, with i.i.d. growth rates $G_i$ which are positive continuous random variables with some common mean growth rate $\mu$. If the distribution of $G_i$ maximizes entropy subject to the mean and nonnegativity constraint\textemdash perhaps more interpretable as uncertainty about cellular behavior than global maximum-entropy assumptions \cite{AsteEntropy}\textemdash then $G_{i;\mu} = X_{i;1/\mu}$ is an exponential random variable of rate $\lambda=1/\mu$. Then, for any time $t$, the normalized configuration 
\begin{align}
    \frac{L_i(t)}{\sum_{j=1}^nL_j(t)} &= \frac{L_i(0) + tG_i}{\sum_{j=1}^nL_j(0) + tG_j} \dconv \frac{X_{i;1/\mu}}{\sum_{j=1}^nX_{j;1/\mu}}
\end{align}
converges in large $t$ to a uniform spacing as in \hyperlink{I2}{I.2}, and therefore has beta-distributed Voronoi lengths converging to gamma-distributed lengths in large $n$ \eqref{eq:circle_gamma_convergence}.

\section{Defining point processes in $d$ dimensions}
Many of the following definitions are standard and recalled here for a self-contained reference.

\subsection{Definitions in  $\R^d$}
The familiar Poisson process of rate $\lambda$ on the half-line $[0, \infty)$ is constructible as the cumulative sum of i.i.d. exponential random variables $X_{i;\lambda}$. The construction of general point processes on a domain $K \subseteq \mathbb{R}^d$, however, is more technical, formalizing the notion of a ``random almost-surely finite subset,'' for which we recall several standard definitions \cite{DaleyPointProcess}. Throughout, we assume $K$ is a complete separable metric space (c.s.m.s.), e.g. a closed subset of $\R^n$.

\begin{definition}[Finite point process \textemdash 2.2-2.4, \cite{DaleyPointProcess}]
\label{def:finite_pp}
A finite point process $N$ on a complete separable metric space $K$ is a family of random variables $N(E)$ for each Borel set $E \in \mathcal{F}_K$, such that, for every bounded $E$,
\begin{align}
    \Prb[N(E) < \infty] &= 1.
\end{align}
\end{definition}

Formally, $N$ is a {\it random measure} (see e.g. 5.1 \cite{DaleyPointProcess} or Chapter 9 \cite{DaleyPointProcess}). Without delving too deeply into this formalism, let us introduce the following definition based upon samples of $N$. 

\begin{definition}[Simple point process]
\label{def:simple_pp}
A {\it simple} point process $N$ is one whose points are non-overlapping. In other words, every sample of $N$ can be written as the counting measure
\begin{align}
    \nu &= \sum_{i\in I} \delta_{x_i}
\end{align}
where $I$ is an index set, $\delta$ denotes the Dirac measure, and $\Prb[x_i = x_j] = 0$ for all $i\ne j$.
\end{definition}

\begin{definition}[Non-atomic point process]
\label{def:nonatomic_pp}
A {\it nonatomic} point process $N$ is one for whom the probability of realizing any particular point $x \in K$ is zero. That is,
\begin{align}
    \Prb[N(\{x\}) > 0] &= 0.
\end{align}
\end{definition}

These definitions are sufficient to define and analyze the Poisson point process.

\begin{definition}[Poisson point process]
\label{def:poisson_pp}
A Poisson point process $N$ on a c.s.m.s. $K$ is defined by an \textit{intensity measure} $\Lambda(E)$ such that, for all Borel sets $E \in \mathcal{F}_K$, $N(E)$ is a Poisson random variable, i.e.
\begin{align}
    \label{eq:poisson_count_dist}
    N(E) &\pdfsim \frac{\Lambda(E)^k\exp(-\Lambda(E))}{k!}.
\end{align}
\end{definition}

\begin{example}[Stationary Poisson process]
A {\it stationary} Poisson process of rate $\lambda$ is given by the intensity measure $\Lambda(E) = \lambda \mu(E)$, where $\mu$ is the Lebesgue measure and $\lambda > 0$ is a positive rate.
\end{example}

\begin{definition}[Independent scattering / complete independence]
\label{def:independent_pp}
A point process $N$ satisfies the {\it independent scattering} or {\it complete independence} property if, for all $n > 1$ and disjoint Borel sets $E_1,\ldots,E_n \in \mathcal{F}_K$, the variables $N(E_1),\ldots,N(E_n)$ are mutually independent.
\end{definition}

\subsection{Characterizing Poisson point processes}
\label{sec:poisson_characterization}

Immediately, we see that Poisson point processes are simple and finite if the intensity measure $\Lambda(E)$ is given by 
\begin{align}
    \Lambda(E) &= \int_E \lambda(x)dx
\end{align}
for some function $\lambda : K \to \R_+$. Poisson point processes \ref{def:poisson_pp} satisfy the complete independence property \ref{def:independent_pp}\textemdash but remarkably, these properties are not logically independent. 
 

\begin{theorem}[Prekopa 1957, Theorem 2.4.V \cite{DaleyPointProcess}]
\label{thm:poisson_pp_characterization}
A point process $N$ is a non-atomic Poisson point process if and only if it is finite \ref{def:finite_pp}, simple \ref{def:simple_pp}, non-atomic \ref{def:nonatomic_pp}, and completely independent \ref{def:independent_pp}.
\end{theorem}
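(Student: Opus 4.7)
The forward direction is straightforward. If $N$ is a non-atomic Poisson point process with intensity measure $\Lambda$, then for bounded $E$ we must have $\Lambda(E) < \infty$ (else the Poisson count \eqref{eq:poisson_count_dist} would not be a.s.\ finite), giving finiteness in the sense of Definition~\ref{def:finite_pp}. Simplicity holds because $\Prb[N(\{x\}) \ge 1] = 1 - \exp(-\Lambda(\{x\})) = 0$ by non-atomicity, so distinct points almost surely do not coincide. Non-atomicity of $N$ follows immediately from $\Lambda(\{x\}) = 0$. Complete independence is a direct consequence of \eqref{eq:poisson_count_dist}: for disjoint bounded $E_1, \ldots, E_n$, partition $E_1 \cup \cdots \cup E_n$ into refinements of $\{E_i\}$ and verify that the joint law of $(N(E_1), \ldots, N(E_n))$ factors as a product of independent Poissons with parameters $\Lambda(E_i)$.

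The substance is the reverse direction. My plan is to reconstruct the intensity from the void probabilities: define
\begin{align}
    \Lambda(E) &:= -\log \Prb[N(E) = 0]
\end{align}
for bounded Borel $E$. Complete independence yields
\begin{align}
    \Prb[N(E_1 \cup E_2) = 0] &= \Prb[N(E_1) = 0]\,\Prb[N(E_2) = 0]
\end{align}
on disjoint $E_1, E_2$, so $\Lambda$ is finitely additive; finiteness of $N$ gives continuity from above ($E_n \downarrow \emptyset \Rightarrow \Prb[N(E_n) = 0] \uparrow 1$), which upgrades finite to countable additivity. Non-atomicity of $N$ translates into $\Lambda(\{x\}) = 0$. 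Hence $\Lambda$ is a non-atomic Borel measure, locally finite on bounded sets.

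It remains to show $N(E) \sim \mathrm{Poisson}(\Lambda(E))$ for each bounded $E$ and, more generally, that the joint law over disjoint sets is a product of Poisson marginals. I would partition $E$ into pieces $\{E_i^{(n)}\}_{i=1}^n$ with $\max_i \Lambda(E_i^{(n)}) \to 0$, possible by non-atomicity of $\Lambda$. By complete independence the counts $N(E_i^{(n)})$ are independent, and by simplicity together with non-atomicity of $N$ one shows
\begin{align}
    \Prb[N(E_i^{(n)}) \ge 2] &= o\bigl(\Lambda(E_i^{(n)})\bigr)
\end{align}
uniformly in $i$, so each $N(E_i^{(n)})$ is effectively Bernoulli with parameter $1 - e^{-\Lambda(E_i^{(n)})}$ up to vanishing error. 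Le~Cam's Poisson-approximation bound then yields $N(E) = \sum_i N(E_i^{(n)}) \dconv \mathrm{Poisson}(\Lambda(E))$, and since the left side is $n$-independent, equality in law follows. Running the same refinement argument simultaneously on a finite disjoint family produces the product Poisson joint law, completing the characterization.

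The main obstacle I anticipate is the uniform smallness estimate $\Prb[N(E_i^{(n)}) \ge 2] = o(\Lambda(E_i^{(n)}))$. Simplicity alone only forbids multiplicities in a \emph{single} realization, and non-atomicity alone only forbids mass at any \emph{single} point; their combined use to rule out two distinct points falling in the same shrinking cell with more than first-order probability requires care — everything after this estimate is routine Poisson approximation and bookkeeping.
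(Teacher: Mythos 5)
The paper does not actually prove this statement: Theorem~\ref{thm:poisson_pp_characterization} is quoted as a known background result of Prekopa, with the proof deferred to Daley and Vere-Jones (Theorem 2.4.V), so there is no in-paper argument to compare yours against. Judged on its own, your sketch follows the standard textbook route: recover the would-be intensity from the avoidance function via $\Lambda(E)=-\log\Prb[N(E)=0]$, use complete independence for additivity and non-atomicity of $N$ for non-atomicity of $\Lambda$, then identify the law of $N(E)$ by refining partitions and Poisson approximation. Two places deserve more care than you give them. First, you need $\Lambda(E)<\infty$ for bounded $E$, i.e.\ $\Prb[N(E)=0]>0$; this is not automatic from a.s.-finiteness of $N(E)$ and is usually obtained by repeated bisection along a dissecting system: if $\Prb[N(E)=0]=0$, complete independence forces a nested sequence of sets $F_n$ of vanishing diameter with $N(F_n)\ge 1$ a.s., whose intersection is either a singleton (contradicting non-atomicity of $N$) or empty (contradicting finiteness). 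Second, the obstacle you flag --- the uniform estimate $\Prb[N(E_i^{(n)})\ge 2]=o(\Lambda(E_i^{(n)}))$ --- is stronger than the argument needs and is not how the standard proof closes the gap: it suffices that $\sum_i\Prb[N(E_i^{(n)})\ge 2]\to 0$, and this follows from simplicity by dominated convergence, since the number $Z_n$ of cells of the $n$th dissection containing two or more points satisfies $Z_n\le N(E)<\infty$ and $Z_n\to 0$ a.s.\ (the finitely many distinct points of a realization are eventually separated by the dissection). With those two repairs your outline is a correct proof, essentially the one in the cited reference.
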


The characterization theorem \ref{thm:poisson_pp_characterization} motivates and justifies the use of Poisson processes (i.e. Poisson-distributed count variables) in any scenario where points in realizations are non-interacting.


\subsection{Sampling}
\label{sec:poisson_sampling}

\begin{lemma}
\label{lem:poisson_conditional}
The conditional distribution of a Poisson point process of intensity $\lambda(x)$ on a domain $K$ is $\frac{1}{\Lambda(K)}\lambda(x)$. 
\end{lemma}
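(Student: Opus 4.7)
The plan is to interpret the lemma in the standard way: conditional on $N(K)=n$, the $n$ points of the Poisson process are i.i.d.\ samples from the density $\lambda(x)/\Lambda(K)$, so that a single sampled point has the claimed conditional density. I would then reduce everything to a direct computation via Bayes' rule, using only two ingredients already established in the excerpt: the Poisson counting law \eqref{eq:poisson_count_dist} and the complete-independence property (Definition \ref{def:independent_pp}), which the Prekopa characterization (Theorem \ref{thm:poisson_pp_characterization}) guarantees.

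First I would set up the computation as follows. Pick $n$ pairwise disjoint Borel sets $E_1,\ldots,E_n\subset K$, write $E=\bigcup_{i=1}^n E_i$, and compute
\begin{align}
\Prb[N(E_i)=1\ \forall i,\ N(K\setminus E)=0\mid N(K)=n].
\end{align}
Because the $E_i$ and $K\setminus E$ are mutually disjoint, complete independence factorizes the joint probability into a product of Poisson terms, each of the form $\Lambda(E_i)e^{-\Lambda(E_i)}$ and $e^{-\Lambda(K\setminus E)}$; the exponentials collapse to $e^{-\Lambda(K)}$. Dividing by the denominator $\Prb[N(K)=n]=\Lambda(K)^n e^{-\Lambda(K)}/n!$ from \eqref{eq:poisson_count_dist}, the exponentials cancel and one is left with
\begin{align}
\frac{n!\prod_{i=1}^n \Lambda(E_i)}{\Lambda(K)^n}.
\end{align}

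Next I would read off the joint density for the ordered $n$-tuple of point locations. Shrinking each $E_i$ to an infinitesimal neighborhood of $x_i$ converts $\Lambda(E_i)$ into $\lambda(x_i)\,dx_i$, so the joint conditional density (after dividing by the $n!$ permutations of unordered labels) becomes
\begin{align}
f(x_1,\ldots,x_n\mid N(K)=n) = \prod_{i=1}^n \frac{\lambda(x_i)}{\Lambda(K)}.
\end{align}
The factorization over $i$ establishes two things at once: the $n$ points are mutually independent conditional on $N(K)=n$, and each has marginal density $\lambda(x)/\Lambda(K)$ on $K$, which is precisely the claimed conditional distribution. Integrating $\lambda/\Lambda(K)$ over $K$ yields $1$, confirming it is a bona fide density.

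The only real subtlety\textemdash and the main potential obstacle\textemdash is the bookkeeping of ordered versus unordered configurations, i.e.\ correctly handling the combinatorial factor $n!$ when passing from the probability of ``one point in each $E_i$'' to the density of a labeled tuple; this is a recurring pitfall in Janossy-density arguments. A secondary point is that the computation is justified for absolutely continuous intensity measures $\Lambda(E)=\int_E \lambda(x)\,dx$, which is the setting assumed in \S \ref{sec:poisson_characterization}, so no additional regularity hypothesis is needed. Once the $n!$ is tracked cleanly, the argument is essentially a one-line application of Bayes together with complete independence.
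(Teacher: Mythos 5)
The paper states Lemma \ref{lem:poisson_conditional} without proof (it is recalled as a standard fact from the point-process literature), so there is no in-paper argument to compare against. Your derivation is correct and is the standard one: complete independence factorizes the probability of one point in each disjoint $E_i$ and none elsewhere into $e^{-\Lambda(K)}\prod_i\Lambda(E_i)$, division by $\Prb[N(K)=n]=\Lambda(K)^n e^{-\Lambda(K)}/n!$ cancels the exponentials, and shrinking the $E_i$ while tracking the $n!$ for labeled tuples yields the i.i.d.\ density $\lambda(x)/\Lambda(K)$, which is exactly the sense in which the paper uses the lemma in the subsequent sampling remark.
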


Lemma \ref{lem:poisson_conditional} allows one to sample a homogeneous Poisson point process of rate $\lambda$ by first sampling a Poisson random variable $N_{\lambda\mu(K)}=n$, and scattering the $n$ as points as i.i.d. uniform random variables in $K$. Point processes with a hard-core repulsion (such as the Mat\'ern Type-II point process \cite{StochasticGeometry}) can often be realized as a {\it thinning} of a Poisson point process.

\begin{definition}[Mat\'ern type-II hard-core point process, \cite{StochasticGeometry}]
\label{def:matern_pp}
Let $N$ be a homogeneous Poisson point process of rate $\lambda$. To each point $X_i$ of $N$, assign a {\it mark} $M_i$ which is an i.i.d. uniform random variable on $(0, 1)$. Then, construct the Mat\'ern process $N'$ with hard-core repulsion distance $r$ as
\begin{align}
    \label{eq:matern}
    N' &= \{X_i \in N\ |\ M_i < M_j\ \forall M_j \in B(X_i, r),\ i \ne j\}.
\end{align}
The points of $N'$ are situated at a minimum distance $r$ from one another.
\end{definition}

\subsubsection{Poisson and Mat\'ern processes on $\mathbb{S}^d$}

Recall that a zero-mean multivariate Gaussian $W_{0,\Sigma}$ with $\Sigma$ a $d\times d$ symmetric positive-definite covariance matrix has the ellipsoids $\{x\ |\ x^\intercal \Sigma^{-1}x = r^2\}$ as level sets of constant density. When $\Sigma = I$, this is the property of {\it spherical symmetry}. Combined with the fact that the map $x \mapsto x \norm{x}^{-1}$ is surface-area-preserving up to a constant multiple, this yields a computationally efficient and numerically stable method for generating iid uniform random variables $U_i$ on $\SP^d$: $U_i = W_{i;0,I} \norm{W_{i;0,I}}^{-1}$.

Lemma \ref{lem:poisson_conditional} then allows the realization of homogeneous Poisson processes on $d$-spheres of radius $r$ as $N_{\lambda Ar^d}=n$, $\{U_i\}_{i=1}^n$, with $A$ the surface area of the unit sphere $\SP^d$ and $\lambda$ the intensity per unit area. Conveniently, the parametrization of $U_i$ in Cartesian coordinates allows the realization of Mat\'ern  processes (\ref{def:matern_pp}) using the geodesic (great-circle) distance $d(U_i, U_j) = r\arccos(U_i^\intercal U_j)$.

\section{Results for two-dimensional Poisson-Voronoi tessellations}

In dimension $d\ge 2$, known analytical results concerning Voronoi tessellations of point processes are limited to lower-dimensional facets, such as edge (2D) and face (3D) distributions (see \cite{MollerGamma} and \S 4.4, \cite{MollerVoronoi}). The difficulty in higher dimensions arises primarily from the loss of uniqueness for shapes satisfying geometric properties\textemdash such as fixed measure (length, area, volume)\textemdash for which one must first consider distributions over shapes, then marginalize over the level-sets satisfying a scalar geometric property, such as aspect ratio. For this reason, in the following models we present primarily numerical results (with partial analytical arguments where applicable), and present a validation of the numerical method in \ref{sec:pv_validation}.

\begin{figure*}[t]
\includegraphics[width=\columnwidth]{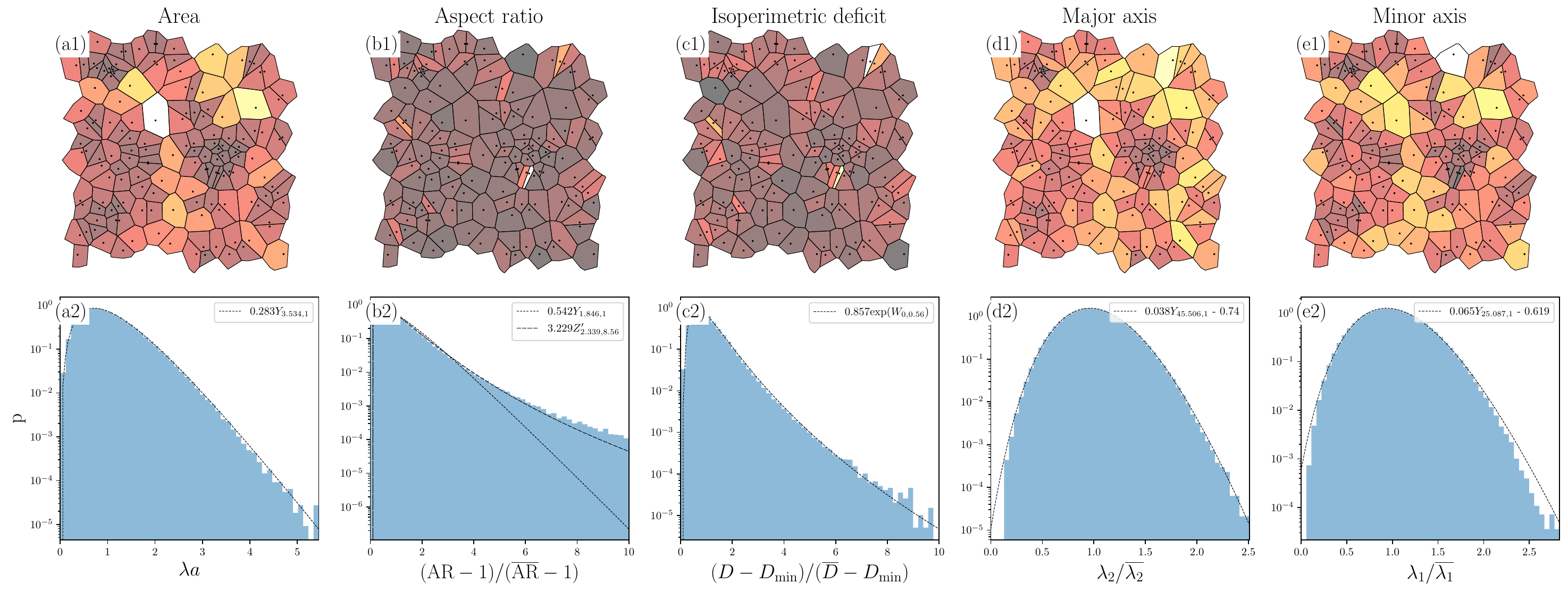}
\caption{Poisson-Voronoi tessellation on the periodic unit square. Panels (a1) - (e1) show the Voronoi cells colored by four geometric  quantities defined from the $n$th moments of area and perimeter. Panels (a2) - (e2) show empirical and maximum-likelihood estimations of gamma (denoted $Y$), lognormal (denoted $\exp(W)$), and beta prime (denoted $Z'$)  distributions where applicable. Poisson-Voronoi tessellations do not have gamma-distributed aspect ratios, instead following an approximate beta-prime law.}
\label{fig:pp_torus}
\end{figure*}

\hypertarget{II.1}{}
\textit{Model II.1 (periodic unit square)}\textemdash 
Consider a homogeneous Poisson point process on the unit 
square $[0, 1]^2$ with periodic boundary conditions, denoted $\mathbb{T}^2$. On general $d$-dimensional domains, the 
point process is specified by an {\it intensity measure} $\lambda(A)$ for subsets $A \subseteq \mathbb{T}^2$ in which 
the count $N_{A;\lambda(A)}$ is a Poisson random variable of rate $\lambda(A)$. 
A {\it homogeneous} Poisson process \textemdash one whose intensity  $\lambda$ is constant on sets $A$ of constant
measure\textemdash is realizable by sampling the total count $N_{;\lambda(\mathbb{T}^2)}$ and assigning the positions 
$\{R_i\}_{i=1}^n$ conditional on $N=n$ as i.i.d. uniform random variables. Figure \ref{fig:pp_torus} shows numerical 
simulations for $k=1000$ trials with intensity $\lambda(\mathbb{T}^2) = 10^3$, which is on the order of the number of somatic cells in {\it Volvox carteri}.

The periodic Voronoi tessellation on $\mathbb{T}^2$, shown in the small-$n$ 
example in Figure \ref{fig:pp_torus}(a1)-(d1), is constructed by copying $\{R_i\}$ in four quadrants around 
$\mathbb{T}^2$ and selecting the sub-tessellation corresponding to the original points. Areas $a$, nondimensionalized 
as $\lambda a$, conform to a gamma random variable with $k\approx 3.5$, as in Fig. \ref{fig:pp_torus}(a2). The 
isoperimetric deficit $D= L / \sqrt{4\pi a} - 1$ 
with $L$ the perimeter, conforms after an appropriate rescaling to a log-normal random variable with $\sigma 
\approx 0.6$, shown in Fig. \ref{fig:pp_torus}(c2). The aspect ratio $\text{AR}$ of a Poisson-Voronoi tessellation 
does not conform to a gamma distribution (Fig. \ref{fig:pp_torus}(b2)), in contrast to confluent tissue \cite{Atia} in which gamma-distributed aspect ratios appear in a diverse range of densely-packed cell types and 
inert matter. 
Instead, AR conforms approximately to a beta-prime distribution, which naturally arises as the 
ratio of independent gamma random variables. Figure \ref{fig:pp_torus}(d2-e2) shows that the major and minor axis 
lengths are approximate gamma-distributed; while we do not in general expect the major and minor axis to be independent 
random variables, the beta-prime distribution governs the aspect ratio in the event that they are, with shape parameter $k_1 / k_2$.


\subsection{Validating Poisson-Voronoi simulations}
\label{sec:pv_validation}

Let $V_i$ be the measure (area, volume, etc.) of the typical Poisson-Voronoi cell.
While the distribution of $V_i$ is presently unknown, numerically integrated second moments of $V_i$ in $\mathbb{R}^2$ and $\mathbb{R}^3$ have been reported \cite{GilbertCrystals}, facilitating comparison with numerical study. Large simulations \cite{TanemuraVoronoi, FerencVoronoi} with $n > 10^6$ cells have found that gamma distributions, and in particular a 3-parameter generalization \cite{TanemuraVoronoi}
\begin{equation}
    \label{eq:generalized_gamma}
    f_{Y_{;k,\lambda,a}}(v) = \frac{a\lambda^{k/a}v^{k-1}e^{-\lambda v^a}}{\Gamma(k/a)}
\end{equation}
achieve good maximum-likelihood fit to data with $<1\%$ error relative to the analytical second moment. 
In Fig. 
\ref{fig:pv_torus_stats}, the estimated second 
moment $\langle V_i^2\rangle$, shape parameter $k$, and 
CDF root mean square error are displayed for $500$ trials 
of $N\pdfsim\text{Poisson}(10^3)$ points. The average 
empirical, gamma, and beta second moments show good 
agreement with Gilbert's \cite{GilbertCrystals} numerically
integrated value of $1.280$ and are within $1\%$ relative 
error, validating the numerical method. The estimated value
of $k\approx3.7$ for gamma on the torus $\mathbb{T}$ is consistent with prior results finding 
$k\approx 3.6$ \cite{WeaireVoronoi} in the plane $\mathbb{R}^2$. On the other hand, the 
estimated value of $k \approx 3.2$ for beta is lower than
gamma and closer to Tanemura's \cite{TanemuraVoronoi}
generalized-gamma \eqref{eq:generalized_gamma} fit finding 
$k=3.315$, suggesting that a beta hypothesis is a good 
substitute for the generalized gamma distribution. Lastly, 
we observe that the beta RMSE is slightly decreased compared 
to gamma.

\subsection{A conjecture for the distribution of Poisson-Voronoi areas}

In dimension $d \ge 2$, exact distributions for the cell measures (areas, volumes, and so on) of Delaunay triangulations (denoted $D_i$) of a Poisson point process are known \cite{RathieDelaunay}, with the distribution in $\mathbb{R}^2$ given by a modified Bessel function of the second kind,
\begin{eqnarray}
    \label{eq:bessel_cell_density}
    f_{D_i}(v) &=& c_1v^{\alpha_1}K_{n}(c_2v^{\alpha_2}),
\end{eqnarray}
with parameters $c_i, \alpha_j, n$. On the other hand, exact distributions for their vertex-cell duals\textemdash the Voronoi tessellations\textemdash are presently unknown. We conjecture, however, that \eqref{eq:bessel_cell_density} should also govern the Voronoi areas $V_i$ based on the following heuristic argument.  
Conditional independence of the volume of the \textit{fundamental region} (a particular set containing the vertices of a Voronoi cell and the origin) and its shape (see \cite{MollerGamma}) suggests that one may assume a particular approximate shape for the typical Voronoi cell, say, an approximately elliptical region.
If, as seen in Fig. \ref{fig:pp_torus}, the principal axes are gamma-distributed, and additionally they are independent, then $V_i$ is proportional to their product and has a distribution precisely of the form \eqref{eq:bessel_cell_density} (see \cite{MalikGamma}).
A possible route to proving this claim is to show that the desired independence holds in a thermodynamic limit.

\begin{figure*}[t]
\centering
\includegraphics[width=\columnwidth]{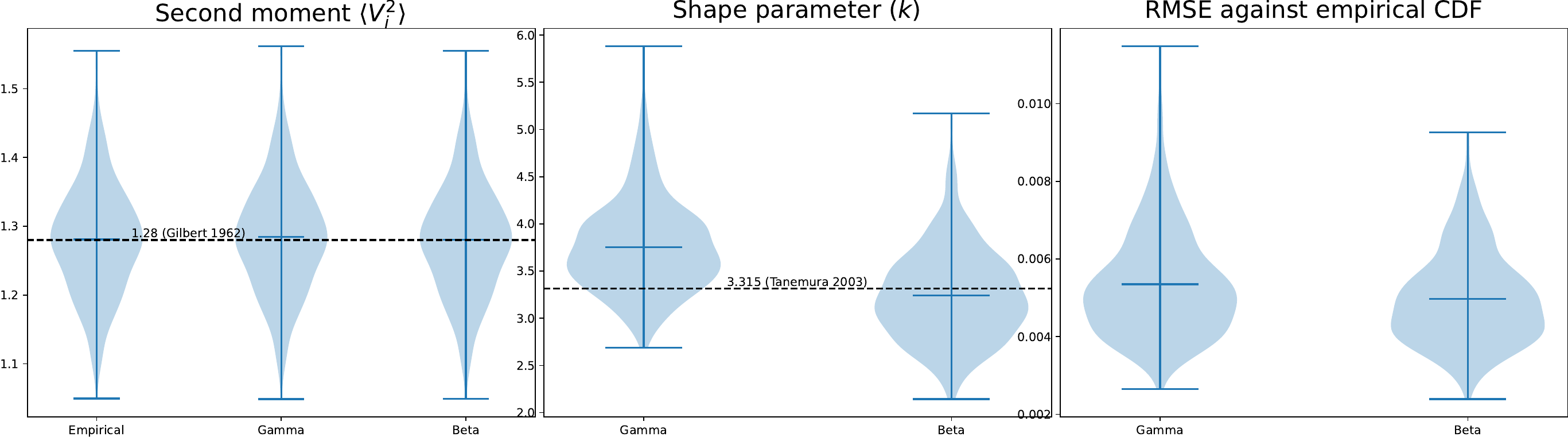}
	\caption{Gamma- and Beta-maximum likelihood fits for Poisson-Voronoi tessellations of the flat torus $\mathbb{T}^2$. The numerically integrated value of the second moment $\mathbb{E}[V_i^2] = 1.280$ found by \cite{GilbertCrystals} is plotted as a dashed line in the left plot, indicating good agreement with our numerical simulations. The experimentally determined shape parameter $k=3.315$ of the generalized-gamma fit found by \cite{TanemuraVoronoi} is plotted as a dashed line in the middle plot.}
	\label{fig:pv_torus_stats}
\end{figure*}


\section{The Voronoi Liquid}
\label{sec:sm_qz}

Let $\Omega \subset \R^d$ be a compact connected measurable set with positive $d$-dimensional Lebesgue measure $\magn{\Omega} > 0$ as in the main text.
For a finite set of points $\{x_i\}_{i=1}^n = \mathcal{X} \subset \Omega^d$ and associated partitions $\{D_i\}_{i=1}^n = \mathcal{D}$ with $D_i\cap D_j = \emptyset$, $\bigcup_i D_i = \TP^d$, the \textit{quantization} problem is to minimize the cost
\begin{align}
    \label{eq:quantizer_1}
    E(\mathcal{X}, \mathcal{D}) &= \sum_{i=1}^n E_i,\quad  E_i = \int_{D_i}\norm{x_i - y}_2^2dy,
\end{align}
which is the approximation error of points $y \in \Omega^d$, and therefore the domain $\Omega^d$ itself, by the partitions $\{x_i, D_i\}$.
This illustrates the origin of \eqref{eq:quantizer_1} in meshing problems \cite{CVT1}.

\subsection{The intensive energy}
Define $U$ to be the  \textit{energy per unit measure}
\begin{align}
    \label{eq:qz3}
    U(\mathcal{X}) &= \frac{n^{2/d}}{\magn{\Omega}^{1+2/d}}E(\mathcal{X}, \text{Vor}(\mathcal{X})).
\end{align}
We argue \eqref{eq:qz3} is an intensive quantity as follows.
First, consider the ``tiling'' (frozen) phase in which  $\mathcal{X}, \mathcal{D}$ is a tiling of $\Omega$ with $x_i$ the centroids of congruent cells $D_0\cong D_i$ and $\magn{\Omega} = n\magn{D_0}$. 
Then \eqref{eq:qz3} becomes
\begin{align}
    U_{\text{tiling}}(\mathcal{X}) &= \frac{\rho^{2/d}}{\magn{\Omega}}\sum_{i=1}^n\int_{D_i}\norm{x_i-y}^2dy = \frac{\rho^{2/d}}{d\magn{D_0}}\tr(S_i(\mu_i)) = \frac{1}{\magn{D_0}^{1+2/d}}\tr(S_i(\mu_i)),
\end{align}
which is simply the cell's second moment $S_i$, nondimensionalized by the cell measure. 
Recall that in 2D, for example, $\tr(S_i)$ has units $\text{length}^4$.
More generally, if $\langle E_i\rangle$ is the average partition energy, then $U$ is expressible as
\begin{align}
    U(\mathcal{X}) &= \rho^{1+2/d}\langle E_i\rangle.
\end{align}
Now we note that $U$ is scale-invariant as follows. 
Let $x \mapsto \alpha x$ with $\alpha > 0$, so that $\magn{\Omega}\mapsto \alpha^d\magn{\Omega}$ and $E_i \mapsto \alpha^{2+d}E_i$. 
Then, 
\begin{align}
    U(\mathcal{X}) &\mapsto \frac{n^{1+2/d}}{\alpha^{d+2}\magn{\Omega}^{1+2/d}}\cdot \alpha^{d+2}\langle E_i\rangle = U(\mathcal{X})
\end{align}
Lastly, let us make an informal argument for $U$ being a dimensionless quantity independent of $n$ in the general-$\mathcal{X}, \mathcal{D}$ case.
Let $a \sim b$ indicate that $a$ is of the same units and scale as  $b$, such that $\magn{\Omega} \sim V$ is the scale of the system measure. 
Then,
\begin{align}
    U(\mathcal{X}) &= \frac{\rho^{2/d}}{\magn{\Omega}} \sum_i E_i \sim \left(\frac{n}{V}\right)^{2/d}\cdot \frac{1}{V}\cdot n \cdot E_i.
\end{align}
Let $\norm{x_i - y} \sim \ell$ be the length scale of the typical partition and note that $E_i \sim \ell^{2+d}$.
Assuming further the decomposition $\ell^d\sim V/n$, the generalization from the tiling case, we have
\begin{align}
    &\sim (\ell^{-d})^{2/d}\cdot \ell^{-d}\cdot \ell^{2+d} \sim 1,
\end{align}
so that $U$ is dimensionless.

\subsection{Interpreting $E$ as a strain energy}
Eq. \eqref{eq:quantizer_1} is expressible in terms of the second area moment $S_i(x_i)$ about $x_i$ as
\begin{align}
    \label{eq:quantizer_m2}
    E(\mathcal{X}, \mathcal{D}) &= \sum_i\tr S_i(x_i).
\end{align}
Since for affine transforms $T_{x}(y) = F(y-x) + x$ about the point $x$ for some $F$ with $\det(F) > 0$, we have that the second moment maps as $S(x) \overset{y\mapsto T(y)}{\mapsto} FS(x)F^\intercal \det F$, and \eqref{eq:quantizer_m2} is further equal to 
\begin{align}
    &= \sum_i m_i \tr F_iF_i^\intercal \det F_i
\end{align}
where $\frac{m_i}{p}I$ is the second area moment of an isotropic natural configuration (e.g. a circle or regular $p$-gon) from which $F_i$ is a strain tensor.
It was already shown in the main text that the optimal $\mathcal{D}$ for fixed $\mathcal{X}$ is the Voronoi tessellation, hence for convex $\Omega$, $D_i$ are convex polygonal.
Then, as argued in \cite{E2} (SM), $\tr( FF^\intercal)$ is the bulk strain energy, in the small-strain limit, of a deformation from a regular $n$-gon, for any isotropic frame-indifferent constitutive relation (which therefore depends only on the principal tensor invariants of $F$).
Note that, for dilatations $\Omega \mapsto \alpha \Omega$ with $\alpha > 0$, 
\begin{align}
    E[\{\alpha x_i, \alpha D_i\}] &\mapsto \alpha^{2+d}E[\{x_i, D_i\}].
\end{align}

\subsection{Gradient flow is a nonlinear diffusion in 1D}
When $d=1$, for positions $x = (x_1,\ldots,x_n) \in O$ where $O \subset \R^n$ are the ordered vectors (also called \textit{Weyl chamber})
\begin{align}
    O &= \{y \in \R^n\ |\ 0 \le y_1 \le \ldots \le y_n < 1 \le y_1 + 1\},
\end{align}
we have the Voronoi centroids and lengths
\begin{align}
    \mu_{i} &= \frac{x_{i-1} + 2x_{i} + x_{i+1}}{4},\quad v_{i} = \frac{x_{i+1} - x_{i-1}}{2},
\end{align}
with $x_{n+1} = x_1$.
Thus the energy \eqref{eq:quantizer_m2} becomes the nearest-neighbors cubic potential
\begin{align}
    \label{eq:quantizer_1d}
    V(\mathcal{X}) &= \sum_i\int_{\frac{x_{i-1}-x_{i}}{2}}^{\frac{x_{i+1}-x_i}{2}}y^2dy = \frac{1}{24}\sum_i\left[(x_{i+1}-x_i)^3 + (x_i - x_{i-1})^3\right].
\end{align}
Moreover, its gradient descent, as we see to be consistent with \eqref{eq:lloyd}, is 
\begin{align}
    \label{eq:lloyd_1d}
    \dot{x}_{i} &= - \frac{\partial U}{\partial x_i} = 2\magn{D_i}(\mu_i - x_i)\\
    &= (x_{i+1}-x_{i-1})\left(\frac{x_{i-1}+2x_i+x_{i+1}}{4} - x_i\right)\\
    &= \frac{1}{4}\left((x_{i+1} - x_{i})^2 - (x_{i} - x_{i-1})^2\right),
\end{align}
whose fixed point is the 1D lattice satisfying $(x_{i+1} - x_{i})^2 = (x_{i} - x_{i-1})^2$ for all $i$.
The dynamics of the spacings $\ell_i = x_{i+1}-x_i$ are then given by
\begin{align}
    \label{eq:elldot}
    \dot{\ell}_i &= \frac{d}{dt}(x_{i+1} - x_i) = \frac{1}{4}(\ell_{i+1}^2 - 2\ell_i^2 + \ell_{i-1}^2) = \frac{1}{4}\sum_{j=i\pm 1} (\ell_j^2 - \ell_i^2).
\end{align} 
We recognize from the final expression that \eqref{eq:elldot} is a centered-differences discretization of the porous medium equation $\partial_t u = \partial_{xx}u^2$.
Moreover, from \eqref{eq:quantizer_1d}, we see that the equivalent energy over lengths $ (\ell_1,\ldots,\ell_n) = \mathcal{L}$ is
\begin{align}
    V(\mathcal{L}) &= \frac{1}{24}\sum_i(\ell_i^3 + \ell_{i-1}^3),
\end{align}
with $\ell_i$ nonnegative and subject to total length constraint $\sum_i\ell_i = L$.
For uniformly distributed (Poisson) initial conditions $x_i$ on $[0,1]$ with periodic boundary conditions, the resulting Voronoi lengths under the dynamics \eqref{eq:elldot} are well-approximated by beta random variables as shown in Figure \ref{fig:quantizer_circle}.

\begin{figure*}[t]
\includegraphics[width=\columnwidth]{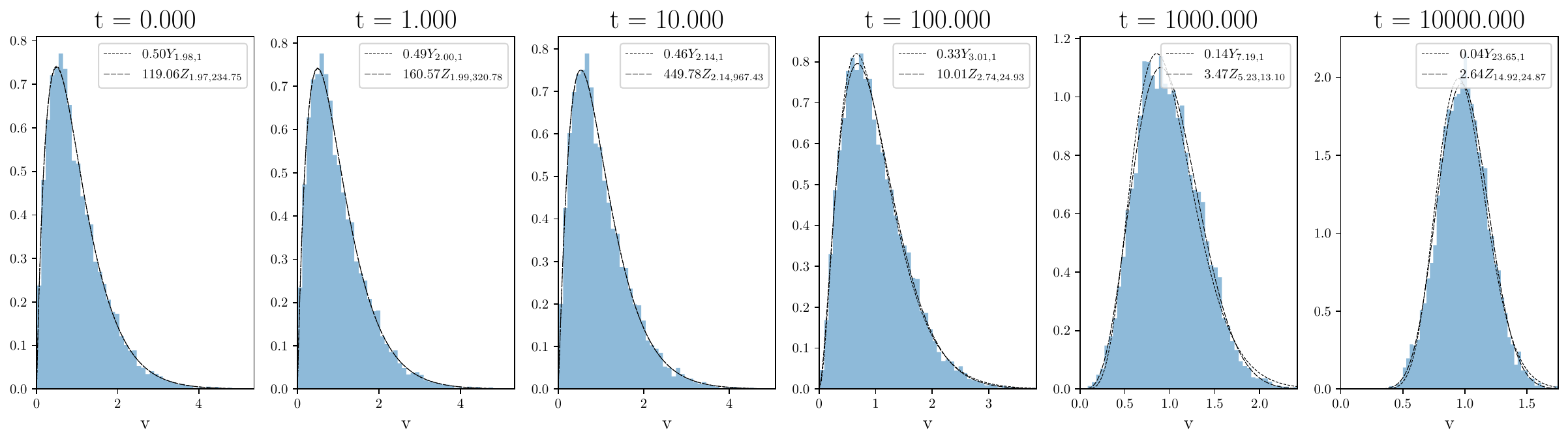}
\caption{Gradient flow of the quantization energy from Poisson initial conditions on the unit circle have Voronoi lengths which are well-approximated by beta and gamma distributions with increasing shape parameter $k$.}
\label{fig:quantizer_circle}
\end{figure*}

\subsection{Stochastic gradient flow on the torus}
Fix now $\Omega = \TP^d$.
It is shown in \cite{CVT1} that $E$ is $C^2$ and 
\begin{align*}
    \frac{\partial U}{\partial x_i} &= \frac{2\magn{D_i}}{\magn{\TP^d}}(x_i - \mu_i) = 2\magn{D_i}(x_i - \mu_i),
\end{align*}
where $\mu_i$ is the centroid of $D_i$.
This suggests the continuous-time gradient flow
\begin{align}
    \label{eq:lloyd}
    \dot{x_i} &= 2\magn{D_i}(\mu_i - x_i).
\end{align}
Consider now the It\^o process
\begin{align}
    \label{eq:stochastic_lloyd}
    dR_i(t) &= \frac{\partial U}{\partial x_i}\biggr\rvert_{R_i(t)} dt + \sigma dW_i(t),
\end{align}
where $W_i(t)$ are independent standard Brownian motions.
In the limit $\sigma\to\infty$, $(\mathcal{X}(t), \mathcal{D}(t))$ converges in distribution (in large $t$) to a Poisson-Voronoi tessellation, while in the limit $\sigma\to 0$ it converges to a centroidal Voronoi tessellation \cite{CVT1, CVT2}. 
The corresponding Fokker-Planck equation for the joint probability density $p(R_1,\ldots,R_n) =: p(R)$ of \eqref{eq:stochastic_lloyd} is
\begin{align}
    \label{eq:lloyd_fp}
    \frac{\partial p}{\partial t} &= -\nabla \cdot p \nabla U + \frac{\sigma^2}{2} \Delta p.
\end{align}
Since $U$ is $C^2$ and now compactly supported, obeying growth conditions required \cite{RiskenFP},
a Gibbs measure $\mu_\beta$ at temperature $\beta^{-1}$ exists and is the stationary solution of \eqref{eq:lloyd_fp}, given by
\begin{align}
    \label{eq:lloyd_gibbs}
    \frac{d\mu_\beta(x)}{dx} &= f_\beta(x) = \frac{1}{Z_\beta}\exp\left(-\beta U(x)\right),
\end{align}
with $Z_\beta$ the normalizing constant and $\beta^{-1} = \sigma^2 / 2$ the diffusion constant. Then \eqref{eq:lloyd_gibbs} minimizes the Gibbs free energy functional
\begin{align}
    G_\beta &= \mathbb{E}_{f_\beta}[U] - \beta^{-1}H[f_\beta],
\end{align}
with $\mathbb{E}_{f_\beta}$ denoting the expectation with respect to $f_\beta$ and $H$ the differential entropy of $f_\beta$.
The previous limits\textemdash Poisson-Voronoi and centroidal Voronoi tessellations\textemdash correspond to the infinite- ($\beta \to 0$) and zero- ($\beta \to \infty$) temperature limits respectively.


\subsection{Specifics of the Langevin simulation}
Let us now sample from the Gibbs measure \eqref{eq:lloyd_gibbs} using Markov Chain Monte Carlo (MCMC).
We take $\Omega = \TP^d$ to be the unit $d$-cube with periodic boundary conditions.
Euler-Maruyama integration of (8) with discrete steps $\varepsilon > 0$ 
is performed by 
\begin{align}
    \label{eq:langevin_mc}
    R_i(t+\varepsilon) &= -\varepsilon \frac{\partial V}{\partial x_i}\evalat_{R_i(t)} + \sqrt{2\varepsilon\beta^{-1}} \eta,\ \eta \pdfsim \mathcal{N}(0, 1).
\end{align}
As \eqref{eq:langevin_mc} no longer necessarily satisfies detailed balance with respect to (7), we add a Metropolis-Hastings step, a procedure known as the Metropolis-adjusted Langevin algorithm (MALA) \cite{RHMC}. 
That is, the transition $R_i(t) \mapsto R_i(t+\varepsilon)$ occurs with probability $\min(1, \alpha)$, where
\begin{align}
    \label{eq:mala}
    \alpha &= \exp\left(-V_\Delta - \beta\frac{\norm{\varepsilon \nabla V_{t+\varepsilon} - R_\Delta}_2^2 - \norm{\varepsilon \nabla V_t + R_\Delta}_2^2}{4\varepsilon}\right)
\end{align}
and $V_\Delta = V_{t+\varepsilon}-V_t$ and $R_\Delta = (R_i(t+\varepsilon) - R_i(t))_{i=1}^n$. 
We iterate \eqref{eq:langevin_mc}, \eqref{eq:mala} for $n=1000$ particles in the unit square with periodic boundary conditions.
For non-Poisson configurations ($\beta > 0$, Fig. 2(b)-(d)), we use the ``frozen'' initial condition (Fig. 2(d)), computed using a gradient descent of (5), which has a faster mixing time than starting with the infinite-temperature (Poisson) initial condition. 

We use the following simulation conditions to produce Fig. 2 in the main text. 
\begin{itemize}
    \item $n = 1000$ particles
    \item $k = 1000$ samples at each temperature $\beta^{-1}$ in distinct Markov chains
    \item $\beta \in [10^{-3}, 1]$ at 13 logarithmically spaced intervals and $\beta \in [1, 40]$ at 21 linearly spaced values
    \item $n_{\text{mix}} = 1000$ mixing steps for each chain prior to sampling
    \item $\varepsilon = 1.0$ time step size
    \item $n_{\text{freeze}} = 2000$ gradient steps of size $\varepsilon_{\text{freeze}} = 0.1$ to produce the frozen initial condition
\end{itemize} 
The sufficiency of the mixing time $n_{\text{mix}}$ is validated by running the same procedure again with double the time $2 n_{\text{mix}}$, which produced no measurable differences in the resulting distributions except at high $\beta=100, k\approx 2000$.

\subsection{Enforcing hard-sphere constraints}

\begin{figure*}[t]
\includegraphics[width=\columnwidth]{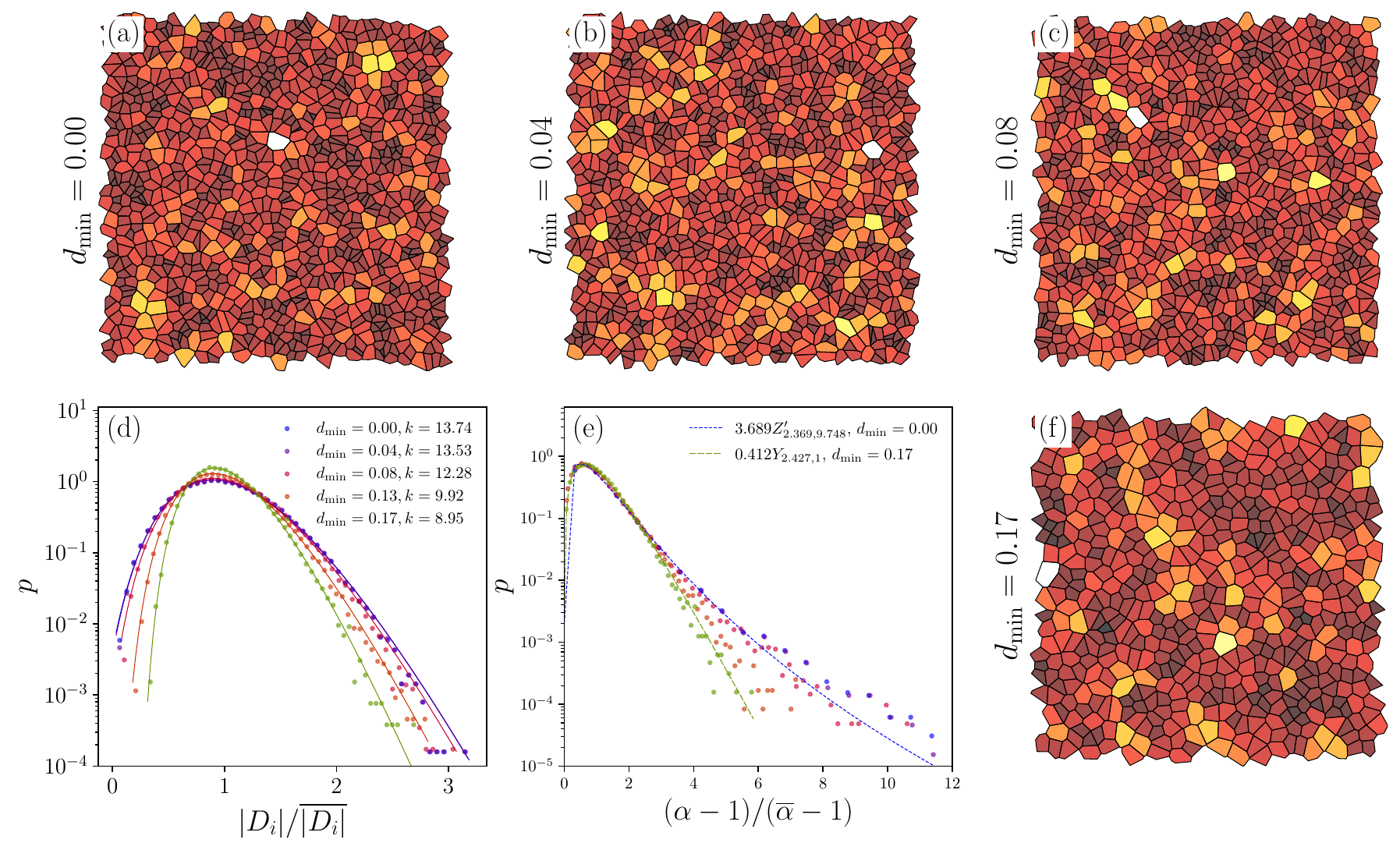}
\caption{Transition to gamma-distributed aspect ratios under Mat\'ern thinning of the Voronoi liquid at fixed $\beta = 10^4$. Panels (a-c, f) show samples thinned by the Mat\'ern rule \eqref{eq:matern} at various distances $d_{\min}$. 
Panel (d) shows that partition areas are well-described by $k$-gamma distributions at all $d_{\min}$, with $k$ monotone decreasing. 
Panel (e) shows that partition aspect ratios, calculated from the second area moment, transition from an approximate beta-prime distribution to a gamma distribution with $k \approx 2.4$.}
\label{fig:vliq_hc}
\end{figure*}

We now apply a hard-sphere constraint to samples of the fixed-$N$, $V$ ensemble (7). 
One of the simplest methods is the Mat\'ern thinning rule \eqref{eq:matern}, which filters points from a particular configuration $\mathcal{X}$ to $\mathcal{X}'$ such that a minimum user-specified spacing $d_{\min}$ is satisfied.
Fig. \ref{fig:vliq_hc} shows several such samples at fixed $\beta = 10^4$ and $d_{\text{min}}$ a multiple $\alpha \in \{0, 0.25, 0.5, 1\}$ of the minimum spacing in the frozen phase (Fig 2(d) in the main text).
As seen in Fig. \ref{fig:vliq_hc}(e), the hard-sphere thinning shows a transition to gamma-distributed aspect ratios.

A second method is to add a divergent term to $E$ (5) for invalid configurations for which $f_\beta$ (7) is vanishing. 
The Langevin diffusion (8) is of course no longer interpretable due to the loss of derivatives. 
However, we can interpret its discretization \eqref{eq:langevin_mc} at valid configurations as an arbitrary sequence of proposals (still outperforming the random walk) for which the Metropolis-Hastings step \eqref{eq:mala} rejects invalid configurations ($V_\Delta = \infty$) and ensures detailed balance with respect to $f_\beta$. 
We mention this approach for completeness; however, additional modifications to MALA need to be made to improve the sampling efficiency, as we find this procedure to be computationally intractable at even moderate hard-sphere radii.
One possible approach is to use Hamiltonian Monte Carlo \cite{RHMC} with constraints.

\section{Approximate Voronoi tessellation of the surface of \textit{V. carteri}}

Below are standard data processing procedures recalled for a self-contained reference, which we compose in a pipeline to produce a simplicial Voronoi tessellation about the somatic cells on the surface of \textit{Volvox}.

\subsection{Fitting ellipsoids in $d$ dimensions}
\label{sec:ellipsoid}
Let $x, v$ be vectors in some $d$-dimensional basis.
The equation of a $(d-1)$-sphere $S$ centered at $v$ is
\begin{align}
    \label{eq:n_sphere}
    \norm{x-v}_2^2 &= 1\quad \forall x\in S.
\end{align}
Applying a rotation $P$ (an orthogonal matrix) of the sphere onto a set of principal axes and stretching along those axes by $\Lambda$ (a positive diagonal matrix), one generalizes \eqref{eq:n_sphere} to an ellipsoid $E$ via a symmetric positive-definite matrix $M = P\Lambda P^\intercal$ defining a generalized inner product in which $x$ satisfies
\begin{align}
    \label{eq:ellipsoid}
    (x-v)^\intercal M(x-v) &= 1\quad \forall x \in E.
\end{align}
The elliptic radii $r_i$ are then $r_i = \Lambda_{ii}^{-\frac12}$ and the elliptic axes are the columns of $P$.

\subsubsection{The minimum volume bounding ellipsoid} 
\label{sec:ellipsoid_bounding}
Given the volume
\begin{equation}
    V = \frac{\pi^{d/2}}{\Gamma(d/2+1)} \prod_{i=1}^d r_i \propto \sqrt{\det(M^{-1})},
\end{equation}
it follows that maximizing $\log\det(M)$ minimizes $V$. Hence for a given dataset $\{x_i\}_{i=1}^n$, the following convex program computes the minimum-volume bounding ellipsoid:
\begin{align}
    \sup_{M,v}\ & \log\det(M)\\
    \text{subject to }\ & (x_i-v)^\intercal M(x-v_i) \le 1\quad \forall i\\
    & M > 0.
\end{align}
Here, $M > 0$ in the sense of linear matrix inequality (LMI), i.e. $M$ is constrained to lie in the positive-definite cone.
However, the offset $x_i-v$ produces a variable-product constraint which is not disciplined convex (DCP). The following reparametrization uses the invertibility of $M$ to convert the constraint to linear least squares:
\begin{align}
    \sup_{A,b}\ & \log\det(A)\\
    \text{subject to }\ & \norm{Ax_i - b}_2^2 \le 1\quad \forall i\\
    A &> 0\\
    M &= A^2 = A^\intercal A\\
    v &= A^{-1}b.
\end{align}
DCP solvers such as CVXOPT \cite{CVXPY} solve this problem. We further reduce the problem size by considering the subset of $X$ lying on the convex hull, computed in $O(n\log n)$.

\subsubsection{$\ell^2$-minimal projection to ellipsoids} 
\label{sec:ellipsoid_l2}
Given a representation of an ellipsoid as $(M, v)$ in the same basis as $x$, define the following convex program:
\begin{align}
    \inf_{Y}\ & \norm{Y - X}_2^2\\
    \label{eq:ellipse_constr}
    \text{subject to }\ & \norm{A(x_i-v)}_2^2 = 1\quad \forall i\\
    & A^\intercal A = A^2 = M,
\end{align}
with $A$ computed by (Hermitian) eigendecomposition of $M$.
Then $Y$ is the minimum-$\ell^2$-distance projection of $X$ onto $(M, v)$. This problem is not DCP; however, since there are no matrix cone constraints, we can simply use non-DCP solvers compatible with nonlinear constraints, such as SLSQP \cite{SLSQP}. The constraint Jacobian for \eqref{eq:ellipse_constr} is $2M(x_i-v)$. 

\begin{figure*}[t]
\centering
\includegraphics[width=\columnwidth]{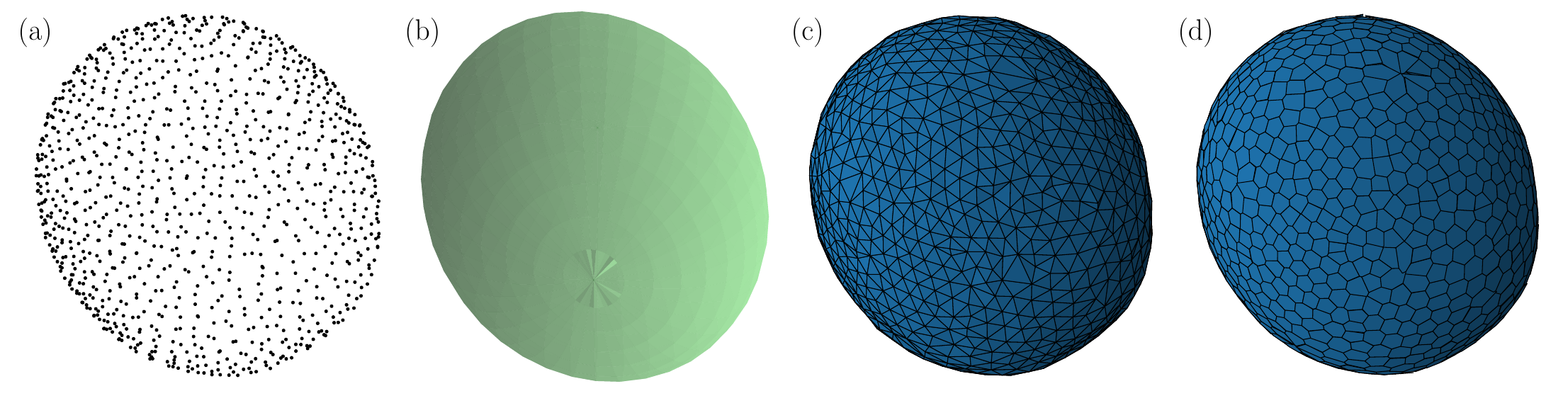}
	\caption{Computational pipeline for tessellating the surface of {\it V. carteri.} (a) The somatic cell positions, available from \cite{Dayetal}, displayed in three-dimensional space. (b) The minimum-volume bounding ellipsoid containing the somatic cells is computed. This establishes the approximate anterior-posterior axis of the spheroid. (c) The $\ell^2$-optimal projection of the somatic cell positions $X$ to this ellipsoid are computed, and the Delaunay tetrahedralization of this projected point cloud $\hat{X}$ is computed. As $\hat{X}$ is now its own convex hull, the index triplets corresponding to the triangular faces lying on the convex hull of the tetrahedral complex are taken to be an approximate Delaunay triangulation of the original point cloud $X$. (d) A quasi-2D Voronoi tessellation of the surface is computed by taking the $\ell^2$-optimal planar embedding of the simplicial ring around any vertex, and constructing a planar Voronoi face in that embedding using the usual circumcenter rule. If invalid (self-intersecting) polygons are produced by the planar embedding, they are corrected by taking the 2D convex hull, which in the limit of zero curvature is a no-op. This procedure introduces overlapping artifacts near gonidia (as seen) but is otherwise unaffected by the global radius of curvature which is large compared to the size of individual polygons.}
	\label{fig:volvox_processing}
\end{figure*}

\subsection{Fitting hyperplanes in $d$ dimensions}
\label{sec:planes}

The equation of a hyperplane $H$ in $d$ dimensions is
\begin{align}
    n \cdot (x - v) &= 0\quad \forall x \in H
\end{align}
for $n, v \in \mathbb{R}^d$. Without loss of generality, we may assume that $\norm{n} = 1$, so that $n$ is a unit normal to $H$; expressing $H$ in the form
\begin{align}
    n\cdot x - b &= 0,
\end{align}
we see that $b = n\cdot v$ is the distance from the origin to $H$. It further follows that the distance from an arbitrary point $y\in \R^d$ to the plane is
\begin{align}
    \label{eq:point_plane_dist}
    d(y, H) &= \magn{n\cdot (y - v)} = \magn{n \cdot y - b}.
\end{align}
Now, let $\{x_i\}_{i=1}^N = X \in \R^{N\times d}$ be a set of data points with $N \ge 3$. The best-fit hyperplane (in the $\ell^2$ sense) is
\begin{align}
    \inf_{n,v}\ &\ \norm{(X-v)n}_2^2\\
    \text{subject to}\ &\ \norm{n}_2 = 1.
\end{align}
The cost is bi-convex in the parameters $v$ and $n$. 
For fixed $n$, the critical point of the cost in $v$ is given by
\begin{align}
    0 &= \frac{\partial}{\partial v}\norm{(X-v)n}_2^2 = 2\sum_{i=1}^N(x_i - v).
\end{align}
Then $v = \frac1N\sum_{i=1}^Nx_i$ is the centroid. 
Letting $\overline{X} = X - v$, the critical point of the cost in $n$ is
\begin{align}
    0 &= \frac{\partial}{\partial n}\norm{\overline{X}n}_2^2 = 2\overline{X}^\intercal \overline{X}n.
\end{align}
A common method to estimate $n$ in this LSQ problem is the singular vector of $\overline{X}$ corresponding to the smallest singular value, which of course is $0$ if $\{x_i\}$ are coplanar. 
By \eqref{eq:point_plane_dist} it follows that the $\ell^2$-orthogonal projection of $X$ onto $H$ is
\begin{align}
    Y &= X - (X-v)nn^\intercal.
\end{align}
Let $u \in \R^d$ be a random vector such that $u\times n \ne 0$ (e.g. a Gaussian vector, for which this is almost surely true); then an invertible planar embedding $Y_H$ of a point $Y$ lying in $H$ is defined by the random orthonormal plane basis $B$:
\begin{align}
    B &= \frac{1}{\norm{v}}\begin{bmatrix} v & v\times n \end{bmatrix},\quad v = u\times n,\\
    Y_H &= YB.
\end{align}

\subsection{3D reconstruction \& analysis of {\it Volvox}}

First, 3D meshes and an ellipsoidal approximation of the organism's surface are constructed using the procedure detailed in Figure \ref{fig:volvox_processing}. 
Then, all Voronoi polygon areas (including those near gonidia, which typically introduce high-aspect-ratio polygons) are converted to solid-angles ($4\pi$ times the area fraction of total) and are filtered by the cutoff $v_c = 0.007$ specified in \cite{Dayetal}. The empirical area distribution for each organism is shifted by this cutoff and nondimensionalized to empirical mean $1$, at which point they are combined across all 6 organisms and shown as a combined histogram in Figure 1.

\end{widetext}

\end{document}